\newtheorem{definition}{Definition}
\newtheorem{theorem}{Theorem}
\newtheorem{lemma}{Lemma}[theorem]
\begin{document}


\title{PBL: System for Creating and Maintaining Personal Blockchain Ledgers}

\author{\IEEEauthorblockN{Collin Connors}
\IEEEauthorblockA{\textit{Computer Science Department} \\
\textit{University of Miami}\\
Miami, FL \\
cdc104@miami.edu}
\and
\IEEEauthorblockN{Dilip Sarkar}
\IEEEauthorblockA{\textit{Computer Science Department} \\
\textit{University of Miami}\\
Miami, FL}
}

\maketitle

\begin{abstract} 


Blockchain technology has experienced substantial growth in recent years, yet the diversity of blockchain applications has been limited. Blockchain provides many desirable features for applications, including being append-only, immutable, tamper-evident, tamper-resistant, and fault-tolerant; however, many applications that would benefit from these features cannot incorporate current blockchains. This work presents a novel architecture for creating and maintaining personal blockchain ledgers that address these concerns. Our system utilizes independent modular services, enabling individuals to securely store their data in a personal blockchain ledger. Unlike traditional blockchain, which stores all transactions of multiple users, our novel personal blockchains are designed to allow individuals to maintain their privacy without requiring extensive technical expertise. Using rigorous mathematical methods, we prove that our system produces append-only, immutable, tamper-evident, tamper-resistant ledgers. Our system addresses use cases not addressed by traditional blockchain development platforms. Our system creates a new blockchain paradigm, enabling more individuals and applications to leverage blockchain technology for their needs.

\end{abstract}

\section{Introduction}
\label{sec:Introduction}

While blockchain technology has been touted as the future of data storage, its breadth is currently limited by the available blockchain development platforms. Blockchain first saw use as a means for storing Bitcoin transactions~\cite{NakamotoBitcoin}. As the technology has become more popular, other more extensive blockchains development platforms have arisen, such as Ethereum~\cite{EthereumWhitepaper} and Hyperledger Fabric~\cite{HyperledgerFabricWhitepaper}. While second-generation blockchains allow users to store more complex data than their first-generation counterparts, they are not ideal for all use cases where a blockchain is desired.

Storing data on a blockchain offers many unique befits. One property of blockchains is that a blockchain is an append-only ledger; thus, data in a blockchain is always ordered. The first piece of data will always be at the start of the blockchain, and the most recently added data will always be at the end. Likewise, blockchains are immutable. Once an order is imposed on the blockchain, it cannot be changed. Blockchains are tamper-evident and tamper-resistant. The structure of a blockchain makes it evident if any changes were made. Similarly, it would require substantial work to change the entire blockchain. Lastly, blockchains are fault-tolerant. Even if part of the blockchain fails, a fault occurs, the blockchain is still available. These five properties, append-only, immutability, tamper-evident, tamper-resistant, and fault tolerance, make blockchain a desirable way to store data in many applications.  

However, blockchain has its challenges. Permissionless blockchains, like those used for cryptocurrencies, allow anyone to read the data stored in the blockchain. These blockchains are not ideal for applications that require storing sensitive data. In contrast, the permissioned blockchains used in enterprise applications provide this extra layer of privacy, but they require extensive technical expertise to manage and maintain. These types of blockchains are not suitable for individuals with little technical knowledge. 

In this work, we propose a third type of ledger, a personal blockchain. Our personal blockchain ledgers maintain all of the desirable features of traditional blockchains while combining the accessibility of a permissionless blockchain and the privacy of a permissioned blockchain. However, unlike traditional blockchains, our ledgers do not store an entire network's data but rather only an individual user's data. Thus we are not proposing a system to replace traditional blockchains but rather a new paradigm to satisfy the areas missed by current blockchains. This new paradigm of blockchain ledger will allow for more diverse applications that could not be achieved through traditional blockchains. 

This work proposes a new system for creating and maintaining personal blockchain ledgers. We provide some background on key terminology and concepts used and our motivations behind creating such a system. We first outline the structure of our ledgers and provide mathematical definitions for a valid ledger in our system. We then describe the necessary components required and how these components interact to create personal ledgers. We then prove that our ledgers hold all of the desirable features of a blockchain. This includes proving that our ledgers are append-only, immutable, tamper-evident, tamper-resistant, and fault-tolerant. Furthermore, we discuss the limitations of our system and how our ledgers can be applied in a practical use case to improve users' privacy and data security. We end with some concluding remarks on future work we have planned.  

\section{Background} 
\label{sec:Background}
Blockchains are  \emph{append-only}, \emph{immutable}, \emph{tamper-evident}, \emph{tamper-resistant}, \emph{distributed fault-tolerant} digital ledgers~\cite{NISTBlockchain}. As the name suggests, a blockchain is a linked chain of blocks. A block consists of the block header and the block data. 

The block header contains metadata about the block. Critically, the block header contains the cryptographic hash of the block data. The block data contains a list of transactions.

While initially, in Bitcoin, transactions were actual cryptocurrency transactions, modern blockchains allow transactions to be any arbitrary data. For example, if a doctor's office was using a blockchain to track appointments, the transactions might be appointment dates and who made the appointment. 

A blockchain is append-only. Thus, new blocks can only be added to the end of the blockchain. Blocks cannot be inserted between two existing blocks or prepended to the start of the blockchain. 

A blockchain is also immutable. The order of blocks cannot be changed in a blockchain. Likewise, the order of transactions cannot be changed in a block. This means that blockchains have a strict unchanging order of transactions and blocks.

A blockchain is tamper-evident and tamper-resistant. If a change is made to a single block, it is evident from the rest of the blockchain that it was made. Likewise, changes must be made to multiple blocks to make even a small change to a single block.

Lastly, a blockchain is fault tolerant. If some of the components fail, for example, by not responding to the rest of the system, the blockchain maintains availability. For example, the blockchain is still available if some of the Full Nodes or Miners go offline in Bitcoin. 

Blockchains rely on cryptographic hashes. A cryptographic hash function is a cryptographically secure function that maps some input data to a fixed length output \cite{cryptographicHashFnReview2012}. Cryptographic hash functions are deterministic, quick to compute, preimage resistant, second preimage resistant, and collision resistant. Some well-known cryptographic hash functions include SHA256, MD5, and RIPEMD160. 

One application of cryptographic hashes used in many blockchains is Merkle trees \cite{MerkleTrees}. A Merkle tree is a tree whose leaf nodes are data points, for example, transactions in a block. The parent of any two nodes is the cryptographic hash of the concatenated left and right children. Because cryptographic hash functions are collision resistant, any small change to the data will cause a change in the root of the Merkle tree. 

\subsection{Related Work}
In previous work, we analyzed the 23 most popular blockchain development platforms~\cite{BCCA_2022}. In this work, we detailed the various features of blockchains created by these platforms. During this process, we noticed that the current state of blockchain development platforms only covers some of the desired use cases. Thus in this work, we will propose a new paradigm for creating blockchains to supplement existing blockchain development. 

Our previous work identified the most popular permissioned blockchain as HyperLedger Fabric~\cite{HyperledgerFabricWhitepaper}. Fabric offers many unique and desirable features for creating blockchain ledgers. Specifically, Fabric offers a new model for creating ledgers, the Execute-Order-Validate (EOV) model. This new way of verifying transactions simplifies the process of creating applications. 

While Fabric provides an excellent solution to many of the problems posed by permissionless blockchains, it often does so at the expense of accessibility. We believe that a new blockchain paradigm that combines the ease of use from permissionless systems and the privacy from permissioned systems needs to exist.

Other systems try to simplify the process for the user by acting as a blockchain database. One such example is the open-source BigChainDB~\cite{BigChainDB}. This blockchain database allows users to store "assets" in a blockchain. BigChainDB assets can be digital representations of physical assets, such as a digital representation of a bike, or the asset can be a digital document, such as a digital healthcare record. Users can trade ownership of assets using a model similar to Bitcoin's UTXO model.  

While BigChainDB does make blockchain more accessible, it is limited by its asset-focused design. This model works well when the user wants to use blockchain for applications such as cryptocurrencies or NFTs; however, it must be optimized for other use cases. We believe that for blockchain to become more accessible, platforms need to shift their focus away from cryptocurrencies and focus on more general applications. 

Another popular blockchain database is Amazon's QLDB~\cite{Amazon_QLDB}. This database gives users a personal cryptographically secure digital ledger. We agree with QLDB's approach of creating individual ledgers; however, QLDB does not provide the necessary privacy for sensitive applications. Users are forced to use Amazon as their sole service provider. This lack of flexibility is not desirable in most applications. 

Like traditional centralized databases, QLDB gives a central entity, Amazon, potential access to a user's data. While many users may trust Amazon to respect their privacy, systems should be in place to ensure that no one entity has access to a user's data. Likewise, when using QLDB, it can be extremely difficult for a user to switch service providers. After deciding to use QLDB, an application can be stuck using Amazon as a service provider. Users should be free to choose their own service providers and change them without disrupting their ledger. 

\subsection{Motivation}
\label{sec:Motivation}
While blockchain technology can enhance many applications, the current blockchain architectures have limitations. Blockchain networks must choose between decentralization, scalability, and  privacy~\cite{Trilemma1997}. Thus no blockchain network is optimal in all three categories. Many blockchains chose to optimize decentralization at the cost of scalability and privacy. 

Blockchains fall into two main categories, permissionless blockchains, such as Ethereum, and permissioned blockchains, such as HyperLedger Fabric.

Permissionless blockchains have a cost when executing transactions. One such cost is the transaction fee users must pay to use the network. While some blockchains have attempted to ensure low transaction fees, the cost still turns many users off. Likewise, users may need to wait for transactions to be approved. Some blockchains have attempted to speed up transaction times, but typically, these solutions could be more scalable. 

Users who wish to avoid the drawbacks of permissionless blockchains may consider permissioned blockchains. However, these blockchains require technical expertise from the users to set up and maintain. Likewise, users need their own hardware to operate these blockchains, making them inaccessible to the average consumer.

We believe a \emph{blockchain system} ought to be a collection of \textbf{\emph{independent services }}. Different independent operators should provide these services. Thus, future users can deploy the blockchain system by choosing each service from available providers.

Likewise, users should be allowed to control their own data. Users should not have to make their data available to anyone they do not trust. Furthermore, no party other than the user should have full access to a user's data. 

We have proposed a novel system for creating personal blockchain ledgers to solve the problems with traditional blockchains. In the next section, we define a valid ledger in our system. Then we provide details on a system of independent services to create and maintain our ledgers. We then prove through mathematical methods that our proposed systems ledgers maintain the desired properties of blockchains. We then discuss how our system can improve applications. In section~\ref{sec:UseCases}, we provide an example use case that traditional blockchains fail to address and highlight how our system improves on this system. We are not proposing a system to replace traditional blockchains but rather a system to enhance blockchain by allowing this technology to be applied to a more diverse set of applications.

\section{Anatomy of a Ledger} 
\label{sec:AnatomyOfALedger}

Similar to traditional blockchains, our ledgers are made up of blocks connected via cryptographic hashes. Our model has two types of blocks: data blocks and genesis blocks. Each ledger has exactly one genesis block, with all other blocks being data blocks. In this section, we dissect the various components of our blocks to provide the background necessary to define our ledgers formally. 

\subsection{Data Blocks}
\label{sec:DataBlocks}
While our architecture relies on key ideas from traditional blockchains, our ledgers differentiate themselves by requiring signatures for all transactions. Figure~\ref{fig:DataBlock} shows a simplified anatomy of a data block in our ledger. Like traditional blockchains, our ledger separates a block into two portions, the block header containing metadata about the block and the block data containing the actual data stored in the block. 

\begin{figure}
	\centering
		\includegraphics[width=.75\columnwidth]{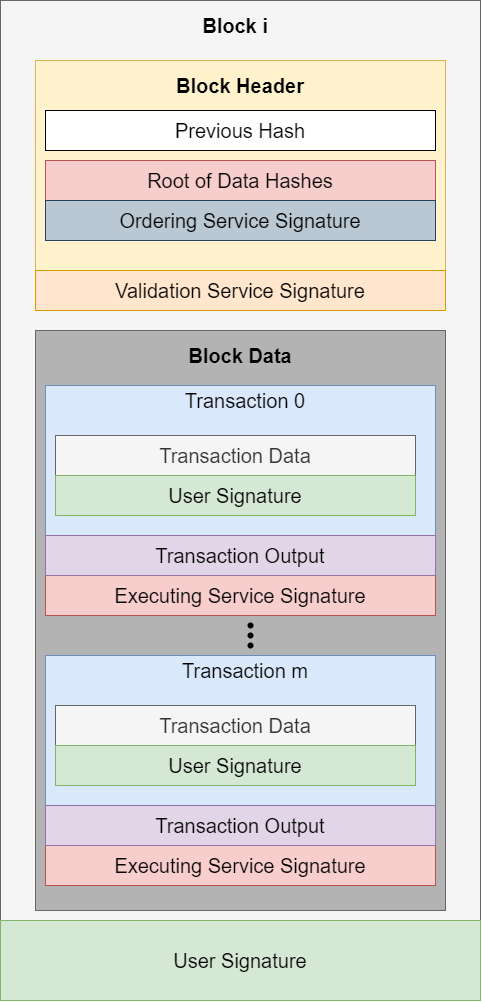}
	\caption{A simplified view of a data block in our personal blockchain system}
	\label{fig:DataBlock}
\end{figure}

The block data contains an ordered list of transactions. Each transaction is comprised of transaction data and the user's signature. The transaction data can be any data the user wishes to store on their ledger. Users often will not need to store the entire document on their ledger but only a document summary. For example, the user may store a monthly bank statement off-chain and store the hash of the statement on their ledger. Unlike other blockchains, this allows our system to be as lightweight as the user desires. The user must sign all transactions. This ensures that the user has acknowledged all transactions on their ledger and that only the user can submit transactions to their ledger. 

Similarly, a user may wish to have a third-party stakeholder sign the transaction to ensure high integrity. For example, suppose a user wants to share their monthly credit card statement ledger to be approved for a loan. In that case, the credit card company should sign each statement before the users submit the transaction. By having both the user's and the credit card company's signatures, the bank giving the loan can confirm that the data has not been falsified. 
As we will discuss in detail in section~\ref{sec:Architecture}, after a transaction is submitted to the ledger, it is evaluated by the executing service. The executing service performs any data-driven computations and appends the output to the transaction. For example, take a user that submits their credit card transactions to their ledger. The executing service can execute chaincode to calculate the user's balance after each transaction and append the output to the transaction. This process is similar to the execution of chaincode in Hyperledger Fabric~\cite{HyperledgerFabricWhitepaper}  

The executing service then signs the transaction and appends its signature to the transaction. This ensures that only the executing service can create transaction outputs. Since transaction outputs are generated from chaincode, the correctness of the outputs can easily be verified.

The block header contains information related to the block. While more information than shown in figure~\ref{fig:DataBlock} can be stored in the block header, such as block creation time, all block headers must contain the fields shown.
 
The previous hash is the hash of the previous block's header and the Validation Service signature. The previous hash cryptographically links each block to all previous blocks. 

Next, each block header contains the hash of all executing service signatures. This hash is generated using the Merkle Tree~\cite{MerkleTrees} of executing service signatures. The Merkle Tree ensures that all transactions are ordered. The ordering service is responsible for ordering the transactions and signing the result. This ensures that only the ordering service can order the transactions. 

Before adding a block to the ledger, the validation service validates the signatures in the block. After validating the signatures, it signs the block header. This ensures that all blocks submitted to the ledger will have valid signatures. 

Finally, to prove that the user, the only stakeholder in our personal ledger's architecture, has seen the block, the user signs it and appends it to their ledger. 

\subsection{Genesis Block}
\label{sec:GenesisBlock}

The first block in the ledger is a special block called the genesis block. Each ledger must start with a genesis block. Each ledger may have only one genesis block. Figure~\ref{fig:Genesis Block} shows a simplified example of a genesis block. 

\begin{figure}
	\centering
		\includegraphics[width=.75\columnwidth]{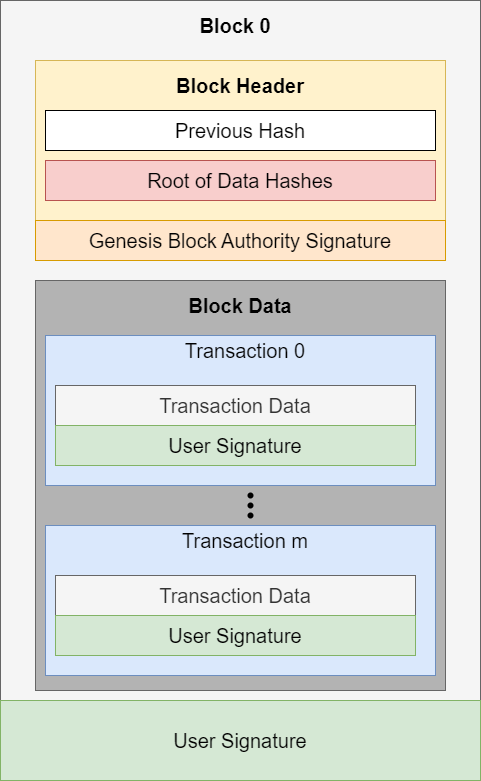}
	\caption{A simplified example of a Genesis Block in our personal blockchain architecture}
	\label{fig:Genesis Block}
\end{figure}

The genesis block contains transactions. However, these transactions contain data on the configuration of the ledger, unlike in a data block where transactions contain data the user wishes to store. These transactions may identify the user, the signing services, or other critical aspects of the ledger's configuration. A genesis block is not required to have any transactions if the user does not wish to provide additional configuration. These transactions have no output and are not processed by the executing service. 

Like data blocks, the genesis blocks have a previous hash field; however, in the genesis block, the previous hash field is set to all zeros since there is no previous block. In our ledger, the genesis block is the only block whose previous hash field is all zeros.

Likewise, the root of the data hashes is stored in the block header. If there are no transactions, this field is set to all zeros. The Genesis Block Authority confirms the validity of the ledger and signs the header. The Genesis Block Authority then sends the ledger back to the user.

Once the user signs the full genesis block, the block is considered valid and added as the first block in the user's ledger.

\subsection{Connections between Blocks}
\label{sec:ConnectionsBetweenBlocks}

Like in traditional blockchains, subsequent blocks in our ledger are connected via the previous hash. This field is the cryptographic hash of the previous block's header. As we will show shortly, this connection allows for many desirable properties. 

\begin{figure}
	\centering
		\includegraphics[width=\columnwidth]{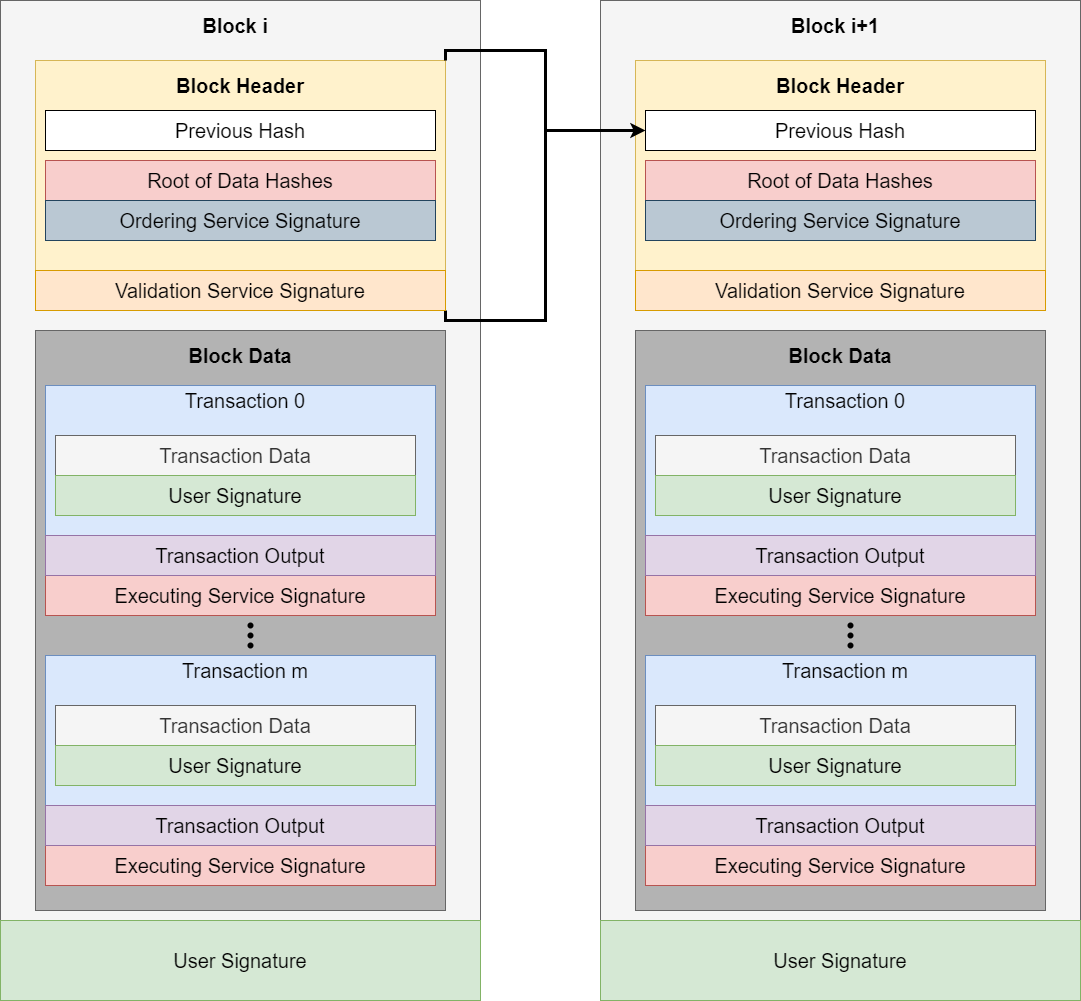}
	\caption{Visuliation of the connection between block i and its subsequent block}
	\label{fig:Connections.drawio}
\end{figure}

\subsection{Formal Definition our Digital Ledgers}
\label{sec:FormalDefinitionOfALedger}
Now that we have some background in the structure of our ledgers, we can formally define a valid ledger in our architecture. However, to define our ledgers properly, we must first define a valid genesis block, a valid data block, and a valid connection between two subsequent blocks.


\begin{definition}
A genesis block $B_0$ is valid if all the following statements are true:
\begin{enumerate}
	\item All required fields are present.
	\item The previous hash field of the genesis block is all 0's
	\item A Genesis Block Authority signs the hash of the genesis block header
	\item The User has signed the block
\end{enumerate}

\end{definition}

\begin{definition}
A data block $B$ is valid if all the following statements are true:
\begin{enumerate}
	\item All required fields are present.
	\item $B$.dataHash$=$Merkel($B.data$)
	\item The Executing Service signed all transactions
	\item The hash of the Merkel Tree of Executing Service signatures is in the header signed by the Ordering Service
	\item The hash of the block header is signed by the Validation service and appended to the header
	\item The User has signed the block
\end{enumerate}
\end{definition}

\begin{definition}
A connection $c(B_{i-1},B_i)$ is valid if all the following statement is true:
\begin{enumerate}
	\item $B_i$\texttt{.previousHash}$= h(B_{i-1}$\texttt{.header}$)$
\end{enumerate}
\end{definition}

\begin{definition}
A ledger $\mathcal{L}$ of length $n$ is valid if the following statements are true 
\begin{enumerate}
	\item The first block in the ledger is a valid genesis block
	\item There is only one genesis block; all other blocks are data blocks
	\item $\forall i \in [1,n]$ data block $B_i$ is valid
	\item $\forall i \in [1,n]$ each connection $c(B_{i-1},B_i)$ is valid
\end{enumerate}
\end{definition}

Using these definitions, we can formally prove that our ledgers maintain the desired properties of a blockchain. 

\section{Architecture} 
\label{sec:Architecture}
As briefly described in the previous section, our proposed architecture utilizes six independent modular services to create and maintain personal ledgers. The six services are:
\begin{enumerate}
\item The Ledger API - Used to facilitate requests between the users and the other services
\item The Storage Service - Responsible for interfacing with the user's desired storage location
\item The Genesis Block Authority - Responsible for creating the genesis block for each ledger
\item The Executing Service - Responsible for executing any data-driven code and signing incoming transactions
\item The Ordering Service - Responsible for forming blocks by ordering the user's transactions and signing the final order
\item The Validation Service - Responsible for ensuring the validity of a block before signing the block and sending it to the user for final approval 
\end{enumerate}

Our architecture is designed to create personal cryptographically linked digital ledgers. Unlike public blockchains such as Bitcoin, where the ledger must be distributed, our ledgers have no such restraint to maximize security and data privacy. However, the user may choose to store their ledgers in a distributed manner. We discuss the storage of the ledger further in this section. Likewise, notice that the block's integrity is no longer maintained by competing miners trying to satisfy a Sybil control mechanism rather through digital signatures.

\begin{figure}
	\centering
		\includegraphics[width=\columnwidth]{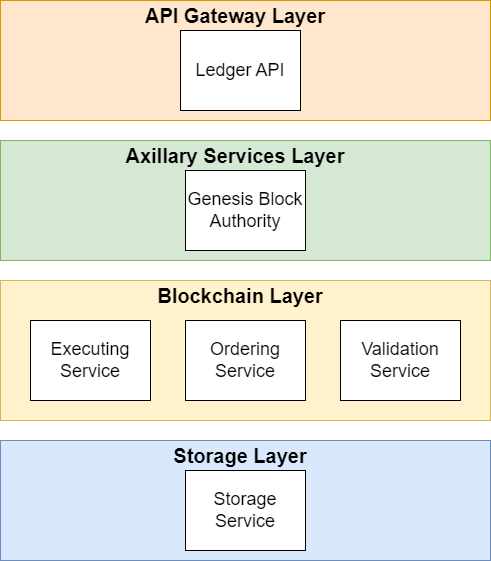}
	\caption{A bird's eye view of our proposed services}
	\label{fig:Services}
\end{figure}

The goal of this work is not to provide technical details on the implementation of each of these systems. Rather we briefly describe each service so that the reader understands the role each service plays in our system. More details on technical implementation for each service can be found on our GitHub page~\cite{CollinConnorsGitHubPage}.

\subsection{Ledger API}
\label{sec:LedgerAPI}
The first service that users interact with is the Ledger API. This service is an API interface that allows users to send requests to the remaining blockchain services. All of the users' interactions begin at the Ledger API. Our architecture gives users a single point of contact for the rest of the services to ensure ease of use. This prevents users from needing to understand complex relationships between services. 

The Ledger API is also responsible for generating the users' cryptographic keys and creating users' Root Addresses. Our model relies on users' ability to cryptographically sign transactions; thus, the users must have a public-private key pair. It is infeasible to expect users to keep track of their own key pairs; thus, our system utilizes a seed phrase that deterministically generates the key pair, similar to seed phrases described in BIP32~\cite{BIP32}. The seed phrase is a secure password allowing users to access their digital ledgers. Since seed phrases are generated from a list of 2048 words and must contain at least 12 words, it is infeasible to brute force a seed phrase. 

The Root Address is a blockchain address, similar to addresses found in Bitcoin, generated from a public-private key pair. This address is used to identify the user to all of the services. Public-private key pairs also allow users to verify that they own a ledger to any service provider. This is similar to how users can use their Bitcoin seed phrase to access their Bitcoin from any wallet provider. 

The Root Address generates each of the user's Ledger Addresses. Ledger addresses are used to identify specific ledgers owned by the user. For example, a user may have a ledger for storing financial transactions and a ledger for storing medical records. Each ledger has a unique Ledger Address generated from the user's Root Address. 

\subsection{Storage Service}
\label{sec:StorageService}
The storage service is responsible for interfacing with a user's storage location. The modular storage service allows users to select how the blockchains will get stored. One user may wish to have their storage service utilize IPFS, while another chooses to store the ledger in a centralized CouchDB database. Nevertheless, another user may store their ledger in cloud services like Google Drive. Users can store their data in multiple locations to increase data redundancy. We allow a broad spectrum of storage options to ensure our system meets users' needs. 

The storage service also stores metadata about the user's blockchains. The storage service stores the user's Root Address and any other Ledger Addresses the user generates. This allows a user to see all of their existing blockchains. This functionality is similar to how Bitcoin wallet software allows users to see all their used Bitcoin addresses. 

\subsection{Genesis Block Authority}
\label{sec:GenesisBlockAuthority}
The Genesis Block Authority is responsible for creating genesis blocks for a ledger. The Genesis Block Authority has a similar role to a traditional certificate authority in web development. 

Before users can add data to their ledgers, they must first obtain a genesis block. The genesis block is the only block in a ledger with the previous hash field being all 0s. Uniquely, the genesis block is the only block with no previous block making it the only block that is not chained to the block before it. 

Like traditional certificate authorities, the Genesis Block Authority can verify the user's KYC (Know-your-customer) information. After verifying this data, the Genesis Block Authority can create a genesis block that identifies the user's ledger as belonging to the user. This genesis block must contain the public key of the user. This allows other services to verify the user's signature quickly. 
 
To better demonstrate the need for a Genesis Block Authority, take the example where a user utilizes our architecture to store firewall audit logs to prove to regulators that they have not been breached. The user can create a ledger to store these logs and share the ledger with the regulators. The Genesis Block Authority will create a genesis block containing the user's public key and the public key of the firewall. Thus, the regulators can quickly verify transactions' integrity on the user's ledger. 

\subsection{Executing Service}
\label{sec:ExecutingService}
The Executing Service is responsible for accepting data from the user and creating complete transactions. The data from the user may be raw data that the user wishes to store in the ledger. The data may be used as inputs to data-driven code. This is parallel to chaincode execution in models such as Hyperledger Fabric. Notice that, unlike platforms like Ethereum, which store EVM states, the raw data is stored on the ledger, making our ledger more accessible to non-technical users. 

The Executing Service transforms the users' transaction into a complete transaction by generating the output and signing the transaction. In cases where the user wishes to store raw data on their ledger without executing data-driven code, the transaction output is set to 0.  

This service is similar to the execute phase in Hyperledger Fabrics Execute-Order-Validate (EOV) model. Thus, unlike other popular blockchain development platforms like Ethereum, our model allows for flexible, extensible, nondeterministic code. Since our code does not execute on a virtual machine like the EVM, developers can create data-driven processes for our ledger using languages they already know.

\subsection{Ordering Service}
\label{sec:OrderingService}
The Ordering Service collects complete transactions, orders the transactions, and forms blocks that the validation service can eventually add to the user's ledger. 
The Ordering Service is analogous to the mempool in Bitcoin. The mempool collects transactions submitted by users, and only when a particular condition is met are some of the transactions formulated into blocks that can be added to the blockchain. Our Ordering Service accepts new complete transactions from the Executing Service and waits until a given cutting condition is met to form a block.

To decide when a new block should be created, the Ordering Service uses a cutting condition. The cutting condition defines how many transactions should be in each block. Some examples of cutting conditions are creating a block every 5 minutes, creating a block every three transactions, or creating a block when the block size is more than 1 MB. The cutting condition is modular. Thus, each service provider can choose an appropriate cutting condition for their Ordering Service.

After ordering the blocks, the Ordering Service signs the root of the Merkle Tree of the transactions. This ensures the integrity of the order set by the Ordering Service. If two transactions are reordered, the root of the Merkle Tree changes; thus, the Ordering Service's signature appears invalid. In the next section, we formally show that this immutability property holds. 
 
This order-preserving property is a fundamental feature of our ledgers. Currently, if a user wishes to store their data in cloud services, or a database, they have no guarantees that the order of the documents is preserved. Through linked cryptographic hashes, users can be assured that the order of their data is preserved.  

\subsection{Validation Service}
\label{sec:ValidationService}
Our final service is the Validation Service responsible for validating the blocks created by the Ordering Service. This service performs the validation phase of the EOV model. 

In the validation, phase blocks are checked to ensure that all the signatures presented are valid. The user's signature on each transaction, the Executing Service signature on each complete transaction, and the Ordering Service signature on the block must be valid. Likewise, the previous hash field must match the hash of the previous block's header. 

After validating the cryptographic properties of the block, the Validation Service must verify that the ordering of the blocks is valid. If the output of transaction i is used as part of the input of transaction j, transaction i must come before transaction j. Invalid ordered blocks are marked as ignored and returned to the executing service. 

After a block has been validated, the Validation Service sends the block back to the Ledger API. Through the Ledger API, the user sings the block and then sends it to the storage service to insert the block into the ledger. 

\subsection{Service Protocol}
\label{sec:ServiceProtocol}
The services described above are implemented in an independent modular fashion. That is, a user can select which service providers they trust for each service. To prevent the centralization of a user's data, users should utilize a pool of multiple trusted service providers. For example, user one may select service providers A, B, and C to provide their Executing Service. In contrast, user two may select service providers A and D to provide the same service. Users can change their trusted service providers during the life cycle of their ledgers without disrupting the ledger's availability. 

Furthermore, we assume that these are independent competing service providers. That is, service provider A does not collude with and competes with service provider B. Likewise, service provider A does not know the other service providers a user utilizes.

With the scale of modern large cloud providers, it is easy to imagine a world where companies such as Microsoft, Amazon, or Google offer these services to users, similar to how these companies already offer many cloud services to users.  

The specific provider for a service is selected at the start of each new transaction and block. After creating the genesis block, the user selects a random service provider from their trusted pool for the Validation Service. The user provides this service provider with the information necessary to form a block header, specifically the previous block's hash. This service provider is known as the Validation Service Provider (VSP).

Next, the user performs the same processes to select an Ordering Service Provider (OSP). The user will let the OSP know who the VSP is so the block can be sent to the correct entity for validation. The OSP and VSP will only change once the user has added a block.

The user will select a random service provider from their trusted pool for each transaction to act as the Executing Service Provider (ESP). The user will inform the ESP whom the OSP is so that signed transactions can be forwarded to the correct service provider. Notice that the ESP changes for each transaction; thus, when the user has multiple trusted service providers, no one service provider will have access to all of a user's data. 

After creating a block, the user will repeat the process of selecting a VSP and OSP. Again, notice that since the user randomly selects a new VSP and OSP, no service provider can access all of the user's data. Likewise, notice that the larger the pool of trusted service providers is, the less likely it is for one service provider to accumulate a user's data. 

More in-depth technical details on our service protocol, including how the services are networked, can be found on our GitHub page~\cite{CollinConnorsGitHubPage}.

In the following section, we will prove that the ledgers created through our system satisfy the five properties of blockchain. We will then discuss how our system can be applied to cover use cases missed by traditional blockchains. 

\section{Properties of Blockchains}
\label{sec:PropertiesOfBlockchains}

As discussed earlier, blockchains are defined by the five properties:
\begin{itemize}
	\item Append Only - New blocks can only be added at the end of a blockchain
	\item Immutable - Blocks and Transactions cannot be reordered
	\item Tamper Evident - If a change is made to one block, it is evident that a change was made
	\item Tamper Resistant - A blockchain resists changes by creating substantial obstacles that must be overcome to make even a small change.
	\item Fault Tolerant - If some components do not cooperate or fail the blockchain remains available for reading and writing. 
\end{itemize}

Since we have formally defined our ledgers, we can now, in this section, formally show that each of these properties holds true for our architecture. 

\subsection{Append Only} 
\label{sec:AppendOnly}

\begin{theorem}
Ledgers in our proposed system are append-only. That is, new blocks can be added at the end of the ledger (Lemma 1.1), new blocks cannot be inserted between two blocks (Lemma 1.2), and new blocks cannot be prepended to the ledger (Lemma 1.3). 
\end{theorem}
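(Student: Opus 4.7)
The plan is to decompose the theorem into the three stated lemmas and dispatch each by directly invoking the formal definitions of valid genesis block, valid data block, valid connection, and valid ledger from Section~\ref{sec:FormalDefinitionOfALedger}. The unifying idea is that the \texttt{previousHash} chain together with the Validation and User signatures on every header rigidly pin each block to its position in the ledger, so any deviation from end-appending must either break a signature or force a cryptographic hash collision.

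I would handle Lemma~1.1 constructively: given a valid ledger $\mathcal{L}$ of length $n$, run the Service Protocol of Section~\ref{sec:ServiceProtocol} on fresh user data, set $B_{n+1}.\texttt{previousHash}=h(B_n.\texttt{header})$, collect the Executing, Ordering, Validation, and User signatures, and then verify each clause of Definition~2 for $B_{n+1}$ and of Definition~3 for the new connection $c(B_n,B_{n+1})$. All conditions of Definition~4 are preserved, producing a valid ledger of length $n+1$. Lemma~1.3 is equally direct: if a candidate block $B'$ is prepended so as to become the new first block, then $B'$ is either a data block (violating Definition~4 condition~1, which requires the first block to be a valid genesis block) or a genesis block (violating Definition~4 condition~2, since the original $B_0$ is still present), so the prepended ledger is invalid.

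Lemma~1.2 is the heart of the theorem and is where I would argue by contradiction. Suppose a block $B'\neq B_{i-1}$ is inserted between $B_{i-1}$ and $B_i$ in an otherwise valid ledger. Validity of the new connection $c(B',B_i)$ would require $B_i.\texttt{previousHash}=h(B'.\texttt{header})$, but validity of the original connection $c(B_{i-1},B_i)$ gives $B_i.\texttt{previousHash}=h(B_{i-1}.\texttt{header})$. Collision resistance of $h$ then forces $B'.\texttt{header}=B_{i-1}.\texttt{header}$; since the header pins down \texttt{dataHash} and, via the Merkle commitment, the block data as well, this in turn forces $B'=B_{i-1}$, contradicting the insertion of a distinct block. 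The only alternative is to overwrite $B_i.\texttt{previousHash}$, but this invalidates the Validation Service and User signatures on $B_i$ and cascades into every subsequent block, again violating Definition~4.

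The principal obstacle is Lemma~1.2: it is the only place where an actual cryptographic assumption, namely collision resistance of $h$ (and of the Merkle root construction), is invoked in an essential way, and the argument must treat both attacker strategies, inserting a $B'$ with a colliding header and rewriting $B_i$ with a fresh \texttt{previousHash}, so that neither escapes the contradiction. The remaining two lemmas follow almost mechanically from unpacking the definitions.
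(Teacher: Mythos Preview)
Your proposal is correct and follows essentially the same route as the paper: the identical three-lemma decomposition, a constructive argument for Lemma~1.1 via setting \texttt{previousHash} and collecting the service and user signatures, a contradiction via hash collision and the Merkle commitment for Lemma~1.2, and a contradiction via the uniqueness of the genesis block for Lemma~1.3. Your treatment of Lemma~1.2 is slightly more thorough than the paper's in that you explicitly dispose of the alternative attacker strategy of rewriting $B_i.\texttt{previousHash}$ (which the paper handles only implicitly by stipulating that $B_{i+1}$ is unchanged), but this is a refinement rather than a different approach.
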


\begin{proof}
To prove this, we need to show:
\begin{itemize}
	\item Blocks can be appended to the ledger, and the ledger remains valid. (Lemma 1.1)
	\item Blocks cannot be inserted between two blocks in the ledger, and the ledger remains valid. (Lemma 1.2)
	\item Blocks cannot be prepended to the start of the ledger, and the ledger remains valid. (Lemma 1.3)
\end{itemize}
In the following lemmas, we show that each item is true. Thus, using the proposed system, ledgers are append-only.
\end{proof}

\begin{lemma}
Using the proposed system, blocks can be appended to the end of the ledger. We show this using the Append Block Algorithm shown in Algorithm 1.
\end{lemma}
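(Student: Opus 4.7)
The plan is to start from an arbitrary valid ledger $\mathcal{L}$ of length $n$ (in the sense of Definition 4) and describe the Append Block Algorithm as a constructive procedure that produces a new block $B_{n+1}$ and a new ledger $\mathcal{L}'$ of length $n+1$. The lemma is then discharged by checking that $\mathcal{L}'$ still satisfies all four clauses of Definition 4. Because the algorithm only adds to the end of $\mathcal{L}$, clauses 1 and 2 (the first block is still a valid genesis block, and there is still exactly one genesis block) are immediate, and clauses 3 and 4 restricted to indices $i \le n$ are inherited from the validity of $\mathcal{L}$. The real content is therefore showing (i) $B_{n+1}$ is a valid data block in the sense of Definition 2, and (ii) the new connection $c(B_n, B_{n+1})$ is valid in the sense of Definition 3.

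To address (i), I would walk through the algorithm service by service, matching each step to a clause of Definition 2. The user submits transaction data, which is forwarded to the Executing Service Provider; the ESP produces a transaction output and appends its signature, discharging clause 3 (Executing Service signs all transactions). The signed transactions are passed to the Ordering Service Provider, which builds the block data, computes $\textrm{Merkel}(B_{n+1}.\textrm{data})$ and writes it into $B_{n+1}.\textrm{dataHash}$, discharging clause 2. The OSP also forms the Merkle tree of executing-service signatures, places its root in the header, signs it, and appends the signature, discharging clause 4. The Validation Service Provider then validates every signature already present, computes and signs $h(B_{n+1}.\textrm{header})$, and appends its signature, discharging clause 5. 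Finally the block is returned through the Ledger API to the user, who signs it, discharging clause 6. Clause 1 (all required fields are present) holds by construction, since each service only emits a block once its designated fields are populated.

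For (ii), the algorithm must read the header of the current tail block $B_n$ from the Storage Service and supply $h(B_n.\textrm{header})$ to the OSP as the value of $B_{n+1}.\textrm{previousHash}$ before the header is hashed and signed. Once this assignment is explicit in the pseudocode, the condition $B_{n+1}.\textrm{previousHash} = h(B_n.\textrm{header})$ required by Definition 3 is satisfied by inspection. Combined with (i) and the trivial inheritance for indices $i \le n$, this establishes that $\mathcal{L}'$ is a valid ledger of length $n+1$, which is exactly the claim of Lemma 1.1.

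I expect the main obstacle to be largely expository rather than mathematical: the proof is essentially a check that the Append Block Algorithm is engineered so that each of its steps is the witness for exactly one clause of Definitions 2 and 3, and that no later step invalidates an earlier one. In particular, I will need to be careful that the Validation Service signs the header \emph{after} the previous hash, data hash, and Merkle root of executing-service signatures are in place, so that $h(B_{n+1}.\textrm{header})$ in clause 5 is computed over the final header. If the algorithm respects this ordering, the lemma follows directly; if any service writes into the header after the VSP signs it, clause 5 of Definition 2 can fail, so the proof should point out this dependency explicitly.
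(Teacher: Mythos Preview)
Your proposal is correct and follows essentially the same approach as the paper: start from a valid ledger of length $n$, construct $B_{n+1}$ via the append procedure, and then verify Definition~4 by checking that $B_{n+1}$ satisfies Definition~2 and that $c(B_n,B_{n+1})$ satisfies Definition~3, with the remaining clauses inherited from the validity of $\mathcal{L}$. The only difference is granularity: the paper's proof begins from the post-ordering state (assuming the Executing and Ordering Service fields are already valid and only the previous hash, Validation Service signature, and user signature remain to be filled in), whereas you trace the full ESP--OSP--VSP--User pipeline to discharge each clause of Definition~2 individually; your version is more explicit but logically equivalent.
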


\begin{algorithm}
\caption{Append a Block Algorithm}\label{alg:AppendABlockAlgorithm}
\begin{algorithmic}
	\State let $\mathcal{L}$
	\State let $b$ 
	\State $b$\texttt{.previousHash} = $h(B_n)$
	\State $b$\texttt{.validationSignature} = /
	\State \hspace{0.75in}	ValidaionService\textsc{.sign}($b$\texttt{.header})
	\State $b$\texttt{.userSignature} = User\textsc{.sign}($b$\texttt{.header})
\end{algorithmic}
\end{algorithm}

\begin{proof}
Assume a valid ledger $\mathcal{L}$ has $n$ blocks. Take an incomplete block $b$, which we wish to append to our blockchain. We will assume $b$'s data is valid, but the block is missing the previous hash, the validation services signature, and the user's signature. All other fields we assume to be valid. Notice that this is the state of new blocks after the ordering service has formed an incomplete block and sent it to the validation service. 

We calculate the cryptographic hash of block $n$'s header, hash$=h(B_n$\texttt{.header}$)$ We then assign our new block's previous hash field to be this hash $b$\texttt{.previousHash}$=$hash. Next, the validation service signs the completed block header and appends the signature to the block header. Finally, the user signs the block and appends the block to the blockchain as $B_{n+1}$.

Since $\mathcal{L}$ was a valid ledger, the genesis block is valid, and all blocks $B_1, ..., B_N$ are valid. To show that $\mathcal{L}$ is valid after appending, we must show that $B_{n+1}$ is valid and $c(B_n,B_{n+1})$ is valid.

Notice that $B_{n+1}$ is a valid block. We assumed only the previous hash, validation service signature, and the user's signature were missing, which we have now provided; thus, all fields are present. We assumed that the data hash field was already present and valid. Lastly, we assumed the Executing Service signatures and the Ordering service signatures were valid. We defined our block such that the Validation service signature is valid and the user's signature is valid; thus, all signatures are valid. 

We assume that $\mathcal{L}$ was valid, to begin with; thus, all connections between blocks $B_{i-1}$ and $B_i$ are valid for $i < n$. Thus we must show that connection $c(B_n,B_{n+1})$ is valid. We defined our block $B_{n+1}$ to validate this connection. Specifically $B_{n+1}$\texttt{.previousHash}=$h(B_n$\texttt{.header}$)$. 

Thus after modifying $b$ and appending it to our ledger as block $B_{n+1}$, $\mathcal{L}$ is a valid ledger. This shows that we can append blocks to our ledger.
\end{proof}

\begin{figure}
	\centering
		\includegraphics[width=\columnwidth]{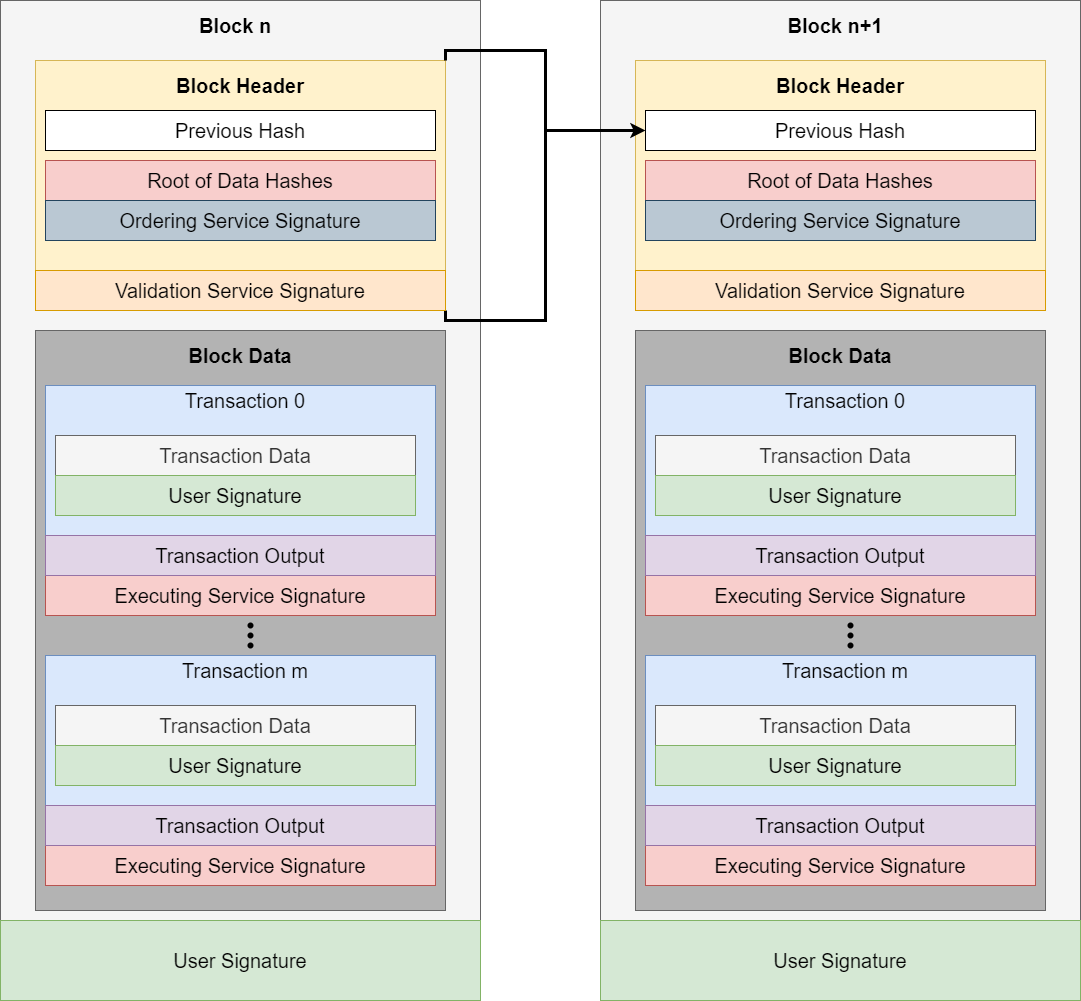}
	\caption{Visualization of the Append Property}
	\label{fig:Append}
\end{figure}

\begin{lemma}
Using the proposed system, blocks cannot be inserted between two existing blocks, and the ledger remains valid.
\end{lemma}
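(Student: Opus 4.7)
The plan is to argue by contradiction: assume a valid ledger $\mathcal{L}$ of length $n$ admits an insertion of a new block $b$ between two existing consecutive blocks $B_{i-1}$ and $B_i$ (for some $1 \le i \le n$) such that the resulting ledger $\mathcal{L}'$ is still valid. After insertion, $\mathcal{L}'$ contains the subsequence $B_{i-1}, b, B_i$, so by the definition of a valid ledger, both connections $c(B_{i-1}, b)$ and $c(b, B_i)$ must be valid. The first condition can be satisfied by construction, since the adversary is free to set $b$\texttt{.previousHash} $= h(B_{i-1}$\texttt{.header}$)$. The real constraint comes from the second connection: it requires $B_i$\texttt{.previousHash} $= h(b$\texttt{.header}$)$.

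First I would note that in the original valid ledger $\mathcal{L}$ the connection $c(B_{i-1}, B_i)$ was valid, so $B_i$\texttt{.previousHash} $= h(B_{i-1}$\texttt{.header}$)$. Combining this with the insertion requirement yields $h(b$\texttt{.header}$) = h(B_{i-1}$\texttt{.header}$)$. Since $b$ is genuinely a new block (otherwise no insertion has taken place), its header differs from $B_{i-1}$'s header in at least one field, so this equality is a collision of the cryptographic hash function $h$, contradicting its collision resistance assumed in Section~\ref{sec:Background}.

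The alternative escape route for the adversary is to modify $B_i$ so that its \texttt{previousHash} field is updated to $h(b$\texttt{.header}$)$. I would then close this loophole by appealing to the valid data block definition: any change to $B_i$\texttt{.header} invalidates the Validation Service signature and the User signature already present on $B_i$, because those signatures bind the exact header bytes. Reconstructing these signatures requires the private keys of the Validation Service and the User, which by our signature-scheme assumptions an adversary cannot forge. Moreover, even if $B_i$ itself were successfully rewritten, the change would propagate: $B_{i+1}$\texttt{.previousHash} would no longer match the new $h(B_i$\texttt{.header}$)$, and the same obstruction reappears further down the chain. Either way, $\mathcal{L}'$ fails to meet the validity definition.

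The main obstacle I anticipate is stating the contradiction cleanly without implicitly assuming what we are proving; in particular, we must be explicit that $b$ is a nontrivial new block (not a byte-for-byte copy of $B_{i-1}$) so that the hash equality really is a collision rather than a tautology. I would handle this by adding the hypothesis $b \neq B_{i-1}$ at the outset of the contradiction, and noting that an insertion identical to $B_{i-1}$ is not a meaningful insertion in the first place. With that subtlety addressed, the lemma follows directly from the collision resistance of $h$ and the unforgeability of the signatures required by Definitions~2 and~3.
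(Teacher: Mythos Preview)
Your proposal is correct and follows essentially the same argument as the paper: both reduce insertion to a hash collision by observing that the successor block's \texttt{previousHash} field must simultaneously equal the hash of the original predecessor's header and the hash of the inserted block's header, contradicting the assumption that the inserted block is new. Your additional discussion of the adversary modifying $B_i$ and the resulting signature and propagation obstructions is a sound extra safeguard that the paper's proof leaves implicit.
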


\begin{proof}
Assume a valid ledger $\mathcal{L}$ has $n$ blocks. Assume there is a method to insert a new block $B_j$, such that $\forall i \in [0,n] B_j \neq B_i$, into the ledger between any two blocks $B_{i}$ and $B_{i+1}$, $i < n-1$, such that the new ledger $\mathcal{L}^\prime$ is still valid after insertion. 

Since we assumed that $\mathcal{L}$ was valid before insertion, then the connection $c(B_i,B_{i+1})$ must be valid. Thus by definition $B_{i+1}$\texttt{.perviousHash} $= h(B_i$\texttt{.header}$)$

If the new ledger $\mathcal{L}^\prime$ is still valid after insertion of block $B_j$ then the connections $c(B_i,B_j)$ and $c(B_j,B_{i+1})$ must be valid. 

If connection $c(B_i,B_j)$ is valid then by definition $B_j$\texttt{.perviousHash} $= h(B_i$\texttt{.header}$)$. Similarly for $c(B_j,B_{i+1})$, $B_{i+1}$\texttt{.perviousHash} $= h(B_j$\texttt{.header}$)$. However we already showed that $B_{i+1}$\texttt{.perviousHash} $= h(B_i$\texttt{.header}$)$ thus $h(B_i$\texttt{.header}$)=h(B_j$\texttt{.header}$)$. Since we assume the use of cryptographic hashes which are unique $B_i$\texttt{.header}$=B_j$\texttt{.header}. In particular, $B_i$\textsl{.dataHash}$=B_j$\texttt{.dataHash}; since we use a Merkle Tree to generate the database field $B_i$\texttt{.data}$=B_j$\texttt{.data}. $B_i$ has the same header and data as $B_j$, $B_i=B_j$. This contradicts our assumption that $B_j$ is a new block. 

Thus new blocks cannot be inserted into our ledger.
\end{proof}

\begin{figure}
	\centering
		\includegraphics[width=\columnwidth]{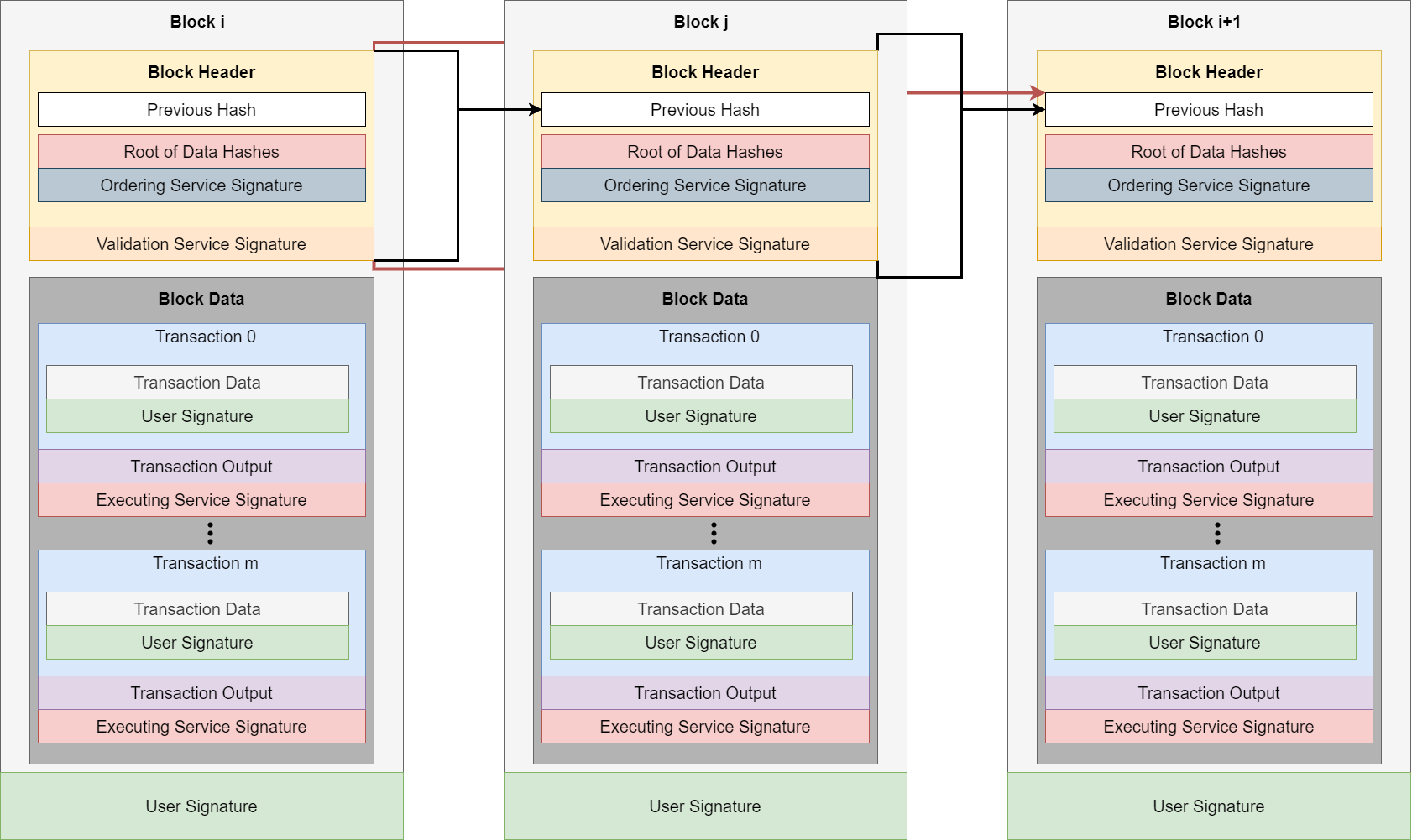}
	\caption{Visualization showing that blocks cannot be Inserted}
	\label{fig:Insert}
\end{figure}

\begin{lemma}
Using the proposed system, blocks cannot be prepended for the start of a ledger, and the ledger remains valid.
\end{lemma}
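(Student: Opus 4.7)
The plan is to argue by contradiction, mirroring the structure of the proof of Lemma 1.2 but leveraging the uniqueness of the genesis block instead of the Merkle/hash collision argument. I would start by assuming that a valid ledger $\mathcal{L}$ of length $n$ exists and that some block $B_{\mathrm{new}}$ can be prepended to produce a new ledger $\mathcal{L}'$ whose first block is $B_{\mathrm{new}}$ and whose remaining blocks are $B_0, B_1, \dots, B_n$ in order, with $\mathcal{L}'$ still valid.

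The main work would then split into two cases according to what kind of block $B_{\mathrm{new}}$ is, since every block in our architecture is either a genesis block or a data block. In the first case, suppose $B_{\mathrm{new}}$ is a data block. Then the first block of $\mathcal{L}'$ is a data block, which directly violates condition 1 of the ledger definition (the first block must be a valid genesis block), so $\mathcal{L}'$ cannot be valid. In the second case, suppose $B_{\mathrm{new}}$ is a genesis block. Then $\mathcal{L}'$ contains two genesis blocks, namely $B_{\mathrm{new}}$ at position $0$ and the original $B_0$ at position $1$, violating condition 2 of the ledger definition (only one genesis block is permitted). Either case yields a contradiction, so no prepending operation can produce a valid ledger.

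As a secondary supporting argument I would also note that the connection requirement forces a hash preimage problem: the original genesis block $B_0$ has $B_0\texttt{.previousHash}$ equal to all zeros by definition, so for the connection $c(B_{\mathrm{new}}, B_0)$ to be valid we would need $h(B_{\mathrm{new}}\texttt{.header}) = 0\cdots0$. By preimage resistance of the cryptographic hash function used in our architecture, no adversary can feasibly produce such a $B_{\mathrm{new}}\texttt{.header}$, giving an independent obstruction even if the genesis-block uniqueness rule were relaxed.

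The main obstacle I anticipate is not any deep calculation but rather being careful about what ``prepending'' means relative to the formal definition of a valid ledger: the proof hinges entirely on the two structural rules that the first block must be a genesis block and that only one genesis block may appear. Once the case split is stated cleanly, each case collapses in one or two lines, and the hash-preimage remark serves only as an extra sanity check rather than the core of the argument.
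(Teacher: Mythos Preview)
Your proposal is correct and takes essentially the same approach as the paper: both argue by contradiction using the structural rules that the first block of a valid ledger must be a genesis block and that only one genesis block is permitted. The paper merely collapses your two-case split into one line (validity of $\mathcal{L}'$ forces $B_{\mathrm{new}}$ to be a genesis block, whence $\mathcal{L}'$ has two genesis blocks) and omits your secondary hash-preimage remark, which is a nice extra observation but not part of the paper's argument.
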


\begin{proof}
Assume a valid ledger $\mathcal{L}$ has $n$ blocks. Assume there is a method to prepended a new block $B_j$, such that $B_j \neq B_0$, into the ledger before block $B_0$, the genesis block, such that the new ledger $\mathcal{L}^\prime$ is still valid after insertion. 

Since $\mathcal{L}^\prime$ is a valid ledger, the first block must be a genesis block. That is, the block we prepended $B_j$ is a genesis block. However, $\mathcal{L}$ was valid before we prepended; thus, $B_0$ must be a genesis block. Notice that $B_0$ is still in our ledger $\mathcal{L}^\prime$. Thus $\mathcal{L}^\prime$ has two genesis blocks. This contradicts our requirement that a valid ledger has only one genesis block. 

Thus new blocks cannot be prepended to our ledger. 
\end{proof}

\begin{figure}
	\centering
		\includegraphics[width=\columnwidth]{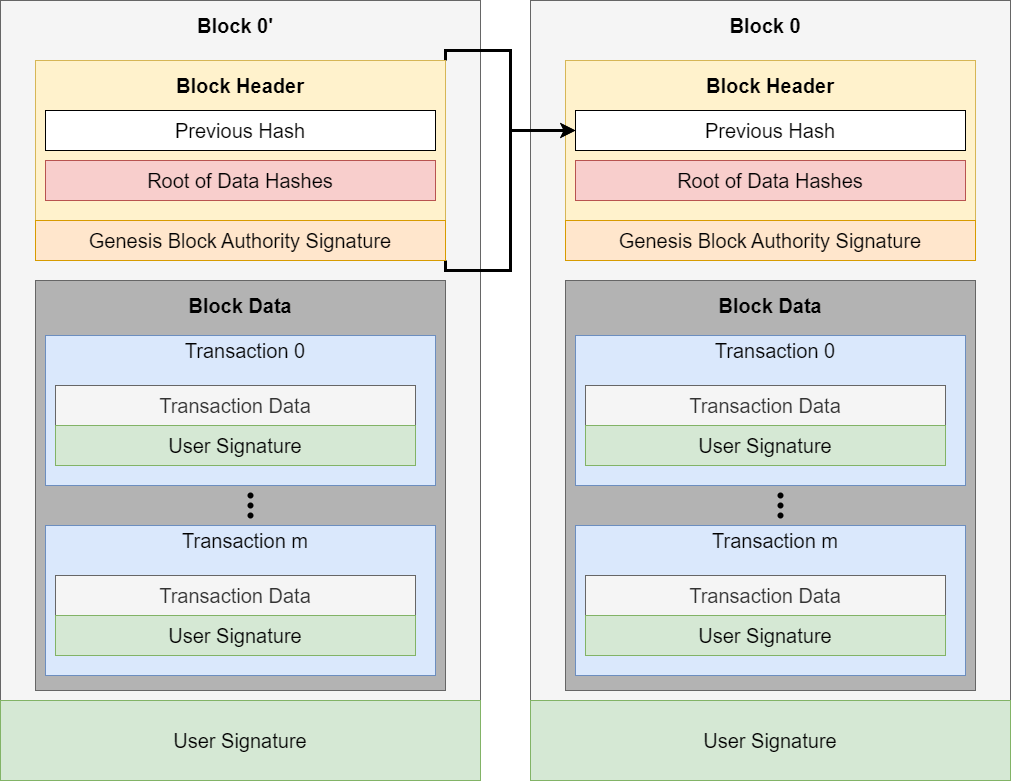}
	\caption{Visualization showing that blocks cannot be Prepended}
	\label{fig:Prepend}
\end{figure}
 
\subsection{Immutability} 
\label{sec:Immutability}

The next property of blockchain that our ledgers hold is the immutability property. To show that our ledgers are immutable, we must show that neither blocks nor transactions within a block can be reordered. If both of these are true, the data in our ledgers is immutable. 

To show this, we show that there is only one valid order of blocks; thus, blocks cannot be reordered. We then show that any reordering of the transactions causes the block header to become invalid. 

\begin{theorem}
Ledgers in our proposed system are immutable. That is, the order of blocks cannot be changed (Lemma 2.1), and the order of transactions cannot be changed (Lemma 2.2).
\end{theorem}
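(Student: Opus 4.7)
The plan is to decompose Theorem 2 into the two lemmas it names and prove each by contradiction, mirroring the template already established in Lemma 1.2. For Lemma 2.1, I would assume a valid ledger $\mathcal{L}$ of length $n$ and suppose, toward contradiction, that there is some non-identity permutation of its blocks that yields a still-valid ledger $\mathcal{L}^\prime$. Since the genesis block must stay at position $0$ (both by the single-genesis clause and because its \texttt{previousHash} is the only all-zero one in a valid ledger), the permutation is non-trivial on the data blocks. Picking the smallest index $i$ at which $\mathcal{L}$ and $\mathcal{L}^\prime$ disagree, the block at position $i$ in $\mathcal{L}^\prime$ is some $B_j$ with $j\neq i$, and its \texttt{previousHash} field must equal $h(B_{i-1}\texttt{.header})$ by validity of $c(B_{i-1},B_j)$ in $\mathcal{L}^\prime$. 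But $B_j$ was originally at position $j$ in $\mathcal{L}$, so its \texttt{previousHash} equals $h(B_{j-1}\texttt{.header})$. Collision resistance of $h$ then forces $B_{i-1}\texttt{.header}=B_{j-1}\texttt{.header}$, and propagating the Merkle argument from Lemma 1.2 over the data field yields $B_{i-1}=B_{j-1}$, contradicting the minimality of $i$ (or contradicting distinctness of blocks).

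For Lemma 2.2, I would fix a valid data block $B$ and suppose a non-identity permutation of its transaction list produces a block $B^\prime$ that is still valid in the sense of the data-block definition. The decisive field is \texttt{dataHash}, which by clauses (2) and (4) equals the Merkle root of the executing-service-signed transactions and is signed by the Ordering Service inside the header. I would argue that any non-trivial permutation of the leaves changes the Merkle root, so $B^\prime\texttt{.dataHash}\neq B\texttt{.dataHash}$. The header of $B^\prime$, however, still carries the Ordering Service signature that was produced for the original root, so clause (4) fails for $B^\prime$ — and even if one tried to replace that signature, the Validation Service signature (clause (5)) and the user signature (clause (6)) were taken over the header containing the old root and would no longer verify. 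Either way, $B^\prime$ is not a valid data block.

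The main obstacle, and the one step that deserves real care rather than hand-waving, is the Merkle-root sub-claim: that any non-identity permutation of the leaves yields a different root. This is not literally true as a set-theoretic fact; it is true only modulo collision resistance of the underlying hash function, because two distinct ordered leaf sequences with the same root would exhibit a collision at some internal node (found by walking down from the root to the first point of disagreement). I would isolate this as an explicit sub-claim, appeal to the cryptographic-hash and Merkle-tree properties recalled in the Background section, and then feed it into both lemmas — Lemma 2.1 uses it implicitly when arguing $B_i\texttt{.data}=B_j\texttt{.data}$ from equal \texttt{dataHash} values, and Lemma 2.2 uses it directly on the executing-service signatures. With this sub-claim in hand, the two lemmas combine immediately to yield Theorem 2: blocks are pinned to their positions by the previous-hash chain, and transactions are pinned to their positions within each block by the Merkle root signed by the Ordering Service, so the full ordered content of $\mathcal{L}$ is immutable.
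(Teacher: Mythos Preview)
Your proposal is correct and matches the paper's approach: the paper proves Lemma~2.1 by induction (genesis block fixed, then $B_i=B'_i$ forces $B_{i+1}=B'_{i+1}$ via \texttt{previousHash} and collision resistance), which is the same argument as your minimal-index-of-disagreement formulation, and proves Lemma~2.2 exactly as you do via the Merkle root in \texttt{dataHash}. Your version is actually more careful than the paper's in isolating the collision-resistance assumption behind the Merkle sub-claim and in tracing which signatures break in Lemma~2.2, but the underlying ideas are identical.
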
 

\begin{proof}
To prove this, we will need to show:
\begin{itemize}
	\item Given a valid ledger, blocks cannot be reordered, and the ledger remains valid. (Lemma 2.1)
	\item Given a valid block, transactions cannot be reordered, and the block remains valid. (Lemma 2.2)
\end{itemize}
We show these two facts in the following lemmas. Thus in using the proposed system, the ledgers are immutable. 
\end{proof}

\begin{lemma}
Given a valid ledger, blocks cannot be reordered, and the ledger remains valid.
\end{lemma}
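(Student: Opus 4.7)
The plan is to prove this by contradiction combined with induction along the chain, exactly mirroring the hash-collision argument used in Lemma 1.2. I would assume that some reordering of a valid ledger $\mathcal{L}$ with blocks $B_0, B_1, \ldots, B_n$ yields a valid ledger $\mathcal{L}'$ with the same multiset of blocks in a different order, and derive a contradiction by showing that the position of every block is forced by the previousHash chain.

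First I would fix the position of $B_0$: since Definition 4 requires a valid ledger to begin with its unique genesis block, and $B_0$ is the only genesis block in $\mathcal{L}$, it must also occupy position $0$ in $\mathcal{L}'$. Then I would proceed by induction on the position index: assuming blocks $B_0, \ldots, B_i$ occupy the first $i+1$ positions of $\mathcal{L}'$ in their original order, let $B_k$ be whichever block sits at position $i+1$ in $\mathcal{L}'$. Validity of $\mathcal{L}'$ forces the connection $c(B_i, B_k)$ to hold, so $B_k\texttt{.previousHash} = h(B_i\texttt{.header})$. But reordering does not mutate the internal fields of any block, and $\mathcal{L}$ was valid, so $B_k\texttt{.previousHash}$ still equals $h(B_{k-1}\texttt{.header})$. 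Equating these and invoking collision resistance of $h$ gives $B_{k-1}\texttt{.header} = B_i\texttt{.header}$, and then the Merkle-tree property (as already exploited in Lemma 1.2) propagates this equality to the full block, forcing $B_{k-1} = B_i$ and hence $k = i+1$.

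The main obstacle I anticipate is cleanly justifying the step ``equal headers imply equal blocks,'' since a pure reordering could in principle place two blocks that happen to share a header in swapped positions without detection. I would handle this by observing that in a valid ledger the previousHash field inside each block's header strictly increases the index it points back to (each $B_j$ for $j \geq 1$ commits to $h(B_{j-1}\texttt{.header})$), so the headers of distinct blocks in $\mathcal{L}$ are necessarily distinct up to a hash collision; this is precisely the same appeal to collision resistance already accepted in Lemma 1.2, so no new cryptographic assumption is needed. With the induction completed, every block in $\mathcal{L}'$ sits at its original index, contradicting the assumption that a nontrivial reordering existed, which establishes the lemma.
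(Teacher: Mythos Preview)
Your proposal is correct and follows essentially the same approach as the paper: a contradiction argument that anchors the genesis block at position $0$ via Definition~4 and then uses induction along the \texttt{previousHash} chain, invoking collision resistance of $h$, to force every block of $\mathcal{L}'$ to sit at its original index. If anything, you are more careful than the paper about the step ``same \texttt{previousHash} implies same block,'' which the paper asserts directly while you route it through equality of headers and the Merkle-tree argument from Lemma~1.2.
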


\begin{proof}
Assume a valid ledger $\mathcal{L}$ has $n$ blocks. Assume there is a method to reorder the blocks of $\mathcal{L}$ to get a new valid ledger $\mathcal{L}^\prime$ such that $\mathcal{L} \neq \mathcal{L}^\prime$.

Since $\mathcal{L}^\prime$ is a valid ledger, the first block must be a genesis block. Since there is only one genesis block in $\mathcal{L}$, $B_0$, then the genesis block of $\mathcal{L}^\prime$ must also be $B_0$. 

Assume that block two blocks are the same in $\mathcal{L}$ and $\mathcal{L}^\prime$, that is $B_i = B^\prime_i$. We want to show that the next blocks must also be equal $B_{i+1} = B^\prime_{i+1}$ since $\mathcal{L}$ is a valid ledger the connection $c(B_i,B_{i+1})$ is valid. Likewise for $\mathcal{L}^\prime$ $c(B^\prime_i,B^\prime_{i+1})$ is valid. Thus $B_{i+1}$\texttt{.perviousHash} $= h(B_i$\texttt{.header}$) = h(B^\prime_i$\texttt{.header}$) = B^\prime_{i+1}$\texttt{.previousHash}. 

Since $h()$ is a cryptographic hash function, the hashes produced are unique. That means only one block in $\mathcal{L}$ can have a given hash in its previousHash field. Since $\mathcal{L}^\prime$ is a reordering of $\mathcal{L}$ this means that $B_{i+1} = B^\prime_{i+1}$.

Thus by mathematical induction $\mathcal{L} = \mathcal{L}^\prime$. This contradicts our assumption that $\mathcal{L} \neq \mathcal{L}^\prime$. Thus blocks cannot be reordered.
\end{proof}

\begin{figure}
	\centering
		\includegraphics[width=\columnwidth]{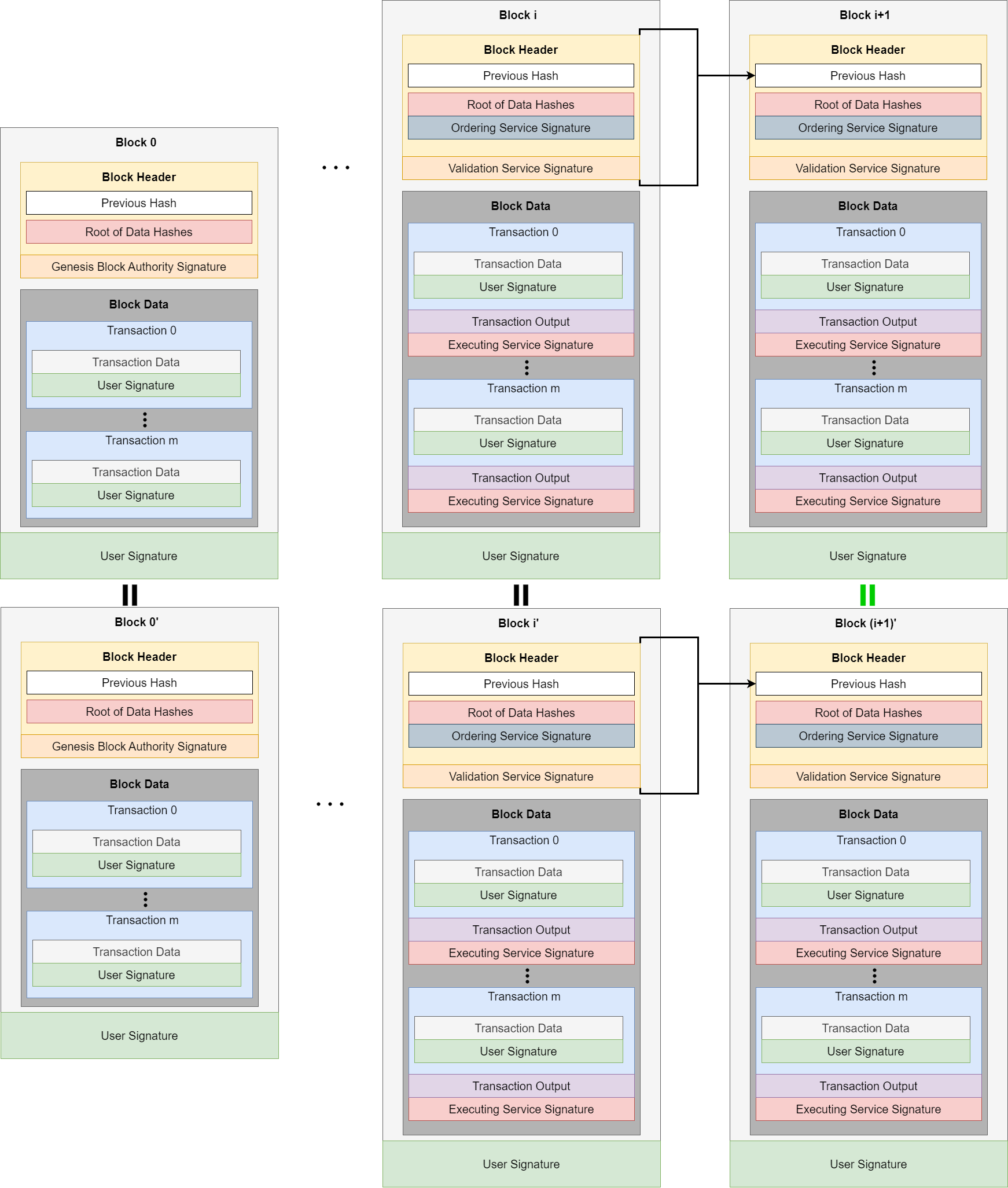}
	\caption{Visualization showing that blocks cannot be reordered}
	\label{fig:BlockOrder}
\end{figure}

\begin{lemma}
Given a valid block, transactions cannot be reordered, and the block remains valid.
\end{lemma}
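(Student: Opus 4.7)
My plan is to argue by contradiction, assuming that some nontrivial reordering of the transactions produces a still-valid block $B'$, and then deriving a contradiction with one of the six clauses of Definition 2. This mirrors the flavor of Lemma 2.1 but centers on the data hash and the chain of signatures that sit on top of it.

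First I would observe that $B.\texttt{data}$ is an ordered sequence and that the Merkle construction described in Section II concatenates the left child with the right child before hashing. Swapping any two leaves that are not already symmetric therefore alters the value stored at their common ancestor, and by the collision resistance of the cryptographic hash $h$ that change propagates up to the root. So the first step is to establish that any nontrivial permutation of the transactions forces $\mathrm{Merkel}(B'.\texttt{data}) \neq \mathrm{Merkel}(B.\texttt{data})$.

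Next I would trace the effect of this change through the header. If the header of $B'$ is left untouched, clause 2 of Definition 2 immediately fails because $B'.\texttt{dataHash}$ no longer matches the Merkle root of the reordered data. If instead $B'.\texttt{dataHash}$ is rewritten to match the new root, then the header bytes differ from those signed by the Validation Service and by the User, so clauses 5 and 6 fail. In parallel, the Merkle tree of Executing Service signatures is itself reordered, changing the value that must appear under the Ordering Service's signature in clause 4.

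The hard part will be making the bookkeeping watertight against an adversary who tries to repair one broken clause at a time. The core of the argument is that each of the Ordering Service, the Validation Service, and the User signs a canonical hash derived from the ordered contents, and we are not free to fabricate new signatures without the corresponding private keys; so every repair that fixes one clause propagates an inconsistency into another. I would conclude that no permutation can leave all six clauses of Definition 2 simultaneously satisfied, yielding the contradiction and establishing the lemma.
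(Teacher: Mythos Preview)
Your proposal is correct and rests on the same core observation as the paper: a nontrivial permutation of the leaves changes the Merkle root, so the stored \texttt{dataHash} no longer matches $\mathrm{Merkel}(B'.\texttt{data})$, violating clause~2 of Definition~2.

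The only difference is scope. The paper's proof is shorter because it stipulates up front that ``all that was changed was the order of transactions,'' so $B_i.\texttt{header}=B'_i.\texttt{header}$ by assumption; the contradiction then follows in one line from the Merkle-root mismatch. You instead run a two-branch case analysis (header untouched versus \texttt{dataHash} rewritten) and chase the inconsistency through the Ordering, Validation, and User signatures. That extra branch is not wrong---it anticipates an adversary who also edits the header---but under the paper's reading of ``reorder'' it is unnecessary, and the signature-repair discussion overlaps with the tamper-resistance argument of Lemma~4.1 rather than this lemma.
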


\begin{proof}
Assume a valid ledger $\mathcal{L}$ has $n$ blocks. Assume there is a method to reorder the transactions $[t_0,...,t_m]$ of any arbitrary block $B_i$ to get a new valid block $B^\prime_i$. 

Notice all that was changed was the order of transactions, thus $B_i$\texttt{.header}$=B^\prime_i$\texttt{.header}. In particular $B_i$\texttt{.dataHash}$=B^\prime_i$\texttt{.dataHash}. Remember that the dataHash is generated from the Merkle Tree of transactions. 

Based on the properties of Merkle Trees and cryptographic hashes, the root is only the same if all of the inputs are in the same order. Thus the Merkel Tree generated by block $B_{i}$ transactions cannot be the same as the one generated from the reordered transactions in $B^\prime_{i}$. This contradicts the fact that they must have the same dataHash field.

Thus transactions cannot be reordered within a block.
\end{proof}

\begin{figure}
	\centering
		\includegraphics[width=\columnwidth]{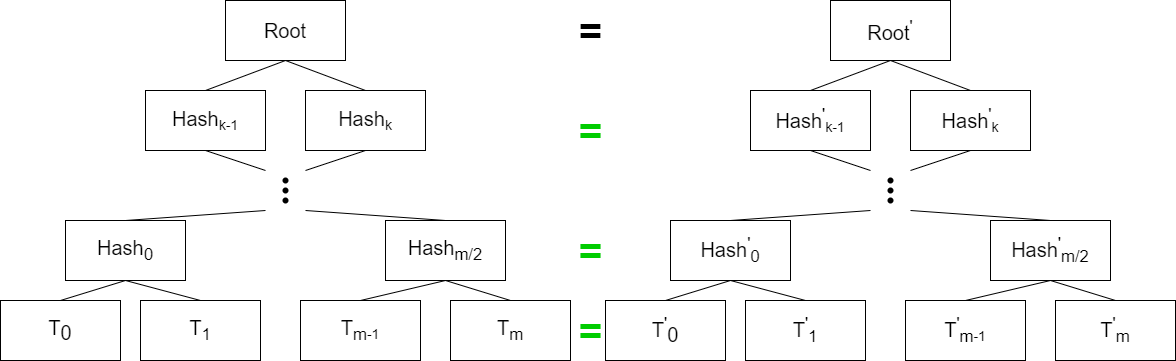}
	\caption{Visualization showing that transactions cannot be reordered}
	\label{fig:BlockOrder}
\end{figure}

\subsection{Tamper Evident}
\label{sec:TamperEvident}
Next, we show that our ledgers are tamper evident. That is, if any data is changed within one block, it is evident that a change was made on the rest of the ledger. In this lemma, we show this by first showing that any change to the block causes a change in the block's header. Next, we show that the change in the header is evident in the next block for all but the last block. Finally, we show that this change is evident in the user's signature in the last block. 

For our tamper-evident requirement, it is important to remember that our ledgers are for personal use. That is, the user is the only stakeholder in the ledger. Thus we assume that the user will refrain from tampering with their own ledger.

\begin{theorem}
Ledgers in our proposed system are tamper-evident. Any change to one block will be evident in subsequent blocks. 
\end{theorem}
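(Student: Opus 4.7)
The plan is to pick an arbitrary tampering of some block $B_i$ in a valid ledger $\mathcal{L}$ of length $n$, producing a modified block $B_i^\prime$, and show that this modification cannot be absorbed silently: at least one validity condition from Section IV must break. I would first reduce the many possible forms of tampering to the single fact that $B_i^\prime$\texttt{.header} $\neq B_i$\texttt{.header}. If the attacker alters any transaction inside the block, then by the same Merkle-tree collision-resistance argument used in Lemma 2.2 the honestly recomputed \texttt{dataHash} must change; leaving the old \texttt{dataHash} in place violates the data-block validity condition $B$\texttt{.dataHash}$=$Merkel($B$\texttt{.data}$)$, so to preserve that condition the attacker is forced to update \texttt{dataHash}, which itself lives in the header. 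If the attacker instead alters any other header field directly, the header is trivially changed. So in every case the tampered block has a header distinct from the original.

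The next step is to propagate this discrepancy forward. Since $h$ is collision resistant, $h(B_i^\prime$\texttt{.header}$) \neq h(B_i$\texttt{.header}$)$. Consider the connection $c(B_i^\prime, B_{i+1})$. Validity of this connection would require $B_{i+1}$\texttt{.previousHash}$= h(B_i^\prime$\texttt{.header}$)$, but $B_{i+1}$ itself was not touched, so its \texttt{previousHash} field still stores the hash of the original $B_i$. The connection is therefore broken, and any verifier walking the chain immediately sees that tampering has occurred. This argument covers every case with $i < n$.

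The main obstacle, and the case I would handle most carefully, is $i = n$: there is no subsequent block whose \texttt{previousHash} can betray the change, so the chain-propagation argument does not apply. Here I would rely on the user-signature clause in the data-block definition. Because changing the header changes the message being signed, the original user signature ceases to verify under the user's public key (established in the genesis block by the Genesis Block Authority), which itself is protected by all the preceding arguments. The attacker cannot re-sign without the user's private key, and the paragraph preceding the theorem explicitly assumes that the user, being the sole stakeholder, does not tamper with their own ledger. Thus the tampering is again evident. Stitching the $i < n$ and $i = n$ cases together, together with the Merkle-tree reduction for data-level changes, yields the theorem.
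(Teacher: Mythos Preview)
Your proposal is correct and follows essentially the same argument as the paper: both reduce any tampering to a change in the block header (via the Merkle-root/\texttt{dataHash} field for data-level changes), use collision resistance of $h$ together with the \texttt{previousHash} linkage to expose the change for $i<n$, and handle the final block $i=n$ separately via the user's signature under the standing assumption that the user does not tamper with their own ledger. Your write-up is in fact slightly more explicit than the paper's in invoking the Merkle-tree collision-resistance reduction and in separating the two cases, but the structure and key ideas are identical.
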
 

\begin{proof}
Assume a valid ledger $\mathcal{L}$ has $n$ blocks. Assume there is a method to modify data within a block $B_i$ such that $i \in [0,n)$ that produces a valid block $B^\prime_i$ on the new valid ledger $\mathcal{L}^\prime$.

First, notice that any changes to the block must change the block header. Clearly, any change to the block header itself changes the block header. Notice that any change to the block data will also require changing the hash of the block data stored in the header. If the hash of the block data were not changed, it would be evident that the block data was changed, contradicting the assumption that it is not tamper evident. 

Since any change to the block changes the header, then our $h(B^\prime_i) \neq h(B_i)$. However since $\mathcal{L}$ and $\mathcal{L}^\prime$ are valid ledger the connections $c(B_i,B_{i+1}$ and $c(B^\prime_i,B_{i+1}$ are valid. By definition $B_{i+1}$\texttt{.previousHash} $= h(B_i)$ and $B_{i+1}$\texttt{.previousHash} $= h(B^\prime_i)$. This implies $h(B^\prime_i) = h(B_i)$. However, we have already shown that this cannot be true.

Next, we must show that $B_n$ cannot be modified. Notice that our previous logic required the modified block to have a subsequent block; thus, it does not apply to $B_n$. However, the user signs all blocks in our system before being added to the ledger. Thus, if anyone other than the user modifies the block, they cannot modify the user's signature such that the signature is still valid.

Thus any changes to a block are evident on our ledgers.
\end{proof}

\begin{figure}
	\centering
		\includegraphics[width=\columnwidth]{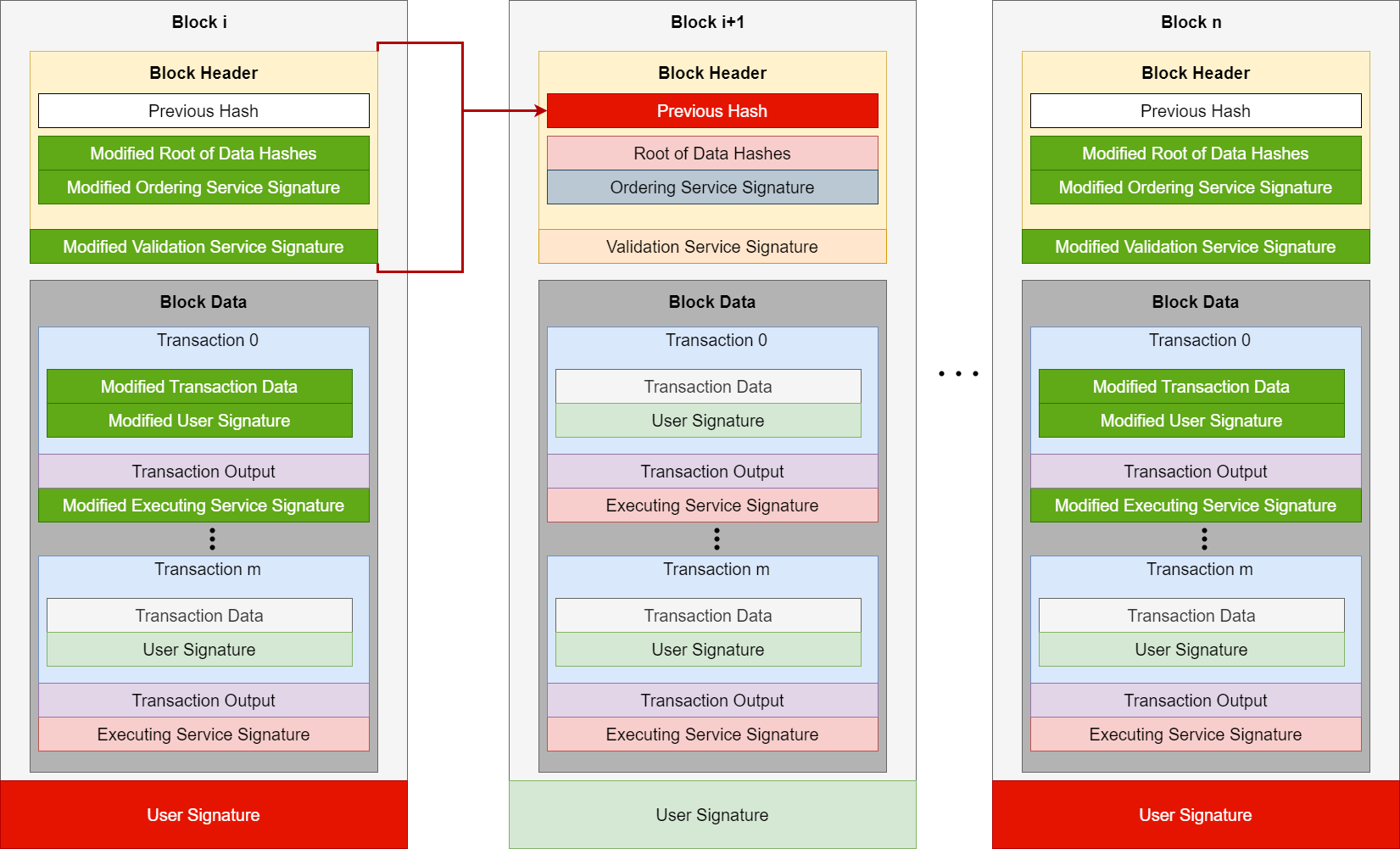}
	\caption{Visualization showing that changing the block data in one block cause the entire ledger to be invalid.}
	\label{fig:TamperEvident}
\end{figure}

\subsection{Tamper Resistant}
\label{sec:TamperResistant}
We have already shown that any changes to an individual block cause the block's user signature field to be invalid and the next block's previous hash field to become invalid. However our ledgers resist this type of tampering in two ways. First any changes to a blocks data requires all of our services to collude in order for the change to be made. This property comes from the fact that each transaction is signed by each service thus any changes will require all of the services to resign the change. Likewise, if all of the services were to resign a change in one block they would be required to modify all subsequent blocks wich will take O(n) time.

\begin{theorem}
In our proposed system ledgers resist tampering by requiring all services to agree on changes (Lemma 4.1) and if all services do agree on a change it would require O(n) time to modify the ledger (Lemma 4.2).
\end{theorem}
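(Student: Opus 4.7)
The plan is to establish the theorem by proving its two constituent lemmas separately. For Lemma 4.1, I would enumerate the signatures embedded in each block according to the formal definition in Section~\ref{sec:FormalDefinitionOfALedger}: each transaction carries the Executing Service's signature, the Merkle root of those signatures is signed by the Ordering Service and placed in the header, the header is then signed by the Validation Service, and finally the user signs the block. I would then argue by cases on where a tamperer attempts to alter data. Any modification of transaction content breaks the Executing Service's signature on that transaction; any reordering of transactions changes the Merkle root in the header and so invalidates the Ordering Service's signature (reusing the argument from Lemma 2.2); any change to the header itself, including updating the \texttt{dataHash} or \texttt{previousHash} fields, invalidates the Validation Service's signature. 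In every such case the user's signature over the header is also invalidated. Since digital signatures are assumed to be unforgeable, the only way to restore validity is for each service whose signature covers a modified field, together with the user, to resign the block. Hence all services must cooperate.

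For Lemma 4.2, I would combine Lemma 4.1 with the tamper-evident property proven in Theorem 3. Suppose an adversary, with the cooperation of all services, wishes to alter block $B_i$ on a ledger of length $n$. By Theorem 3, any change to $B_i$ forces a change in $h(B_i$\texttt{.header}$)$, which in turn requires $B_{i+1}$\texttt{.previousHash} to be updated in order to preserve the validity of the connection $c(B_i,B_{i+1})$. This update modifies $B_{i+1}$'s header and therefore, by Lemma 4.1, requires all services to resign $B_{i+1}$. The modified header of $B_{i+1}$ then has a new hash, forcing the same cascade at $B_{i+2}$, and so on up to $B_n$. The worst case is $i=1$ (tampering with the first data block), which demands $n-1$ block rewrites, each of bounded per-block cost. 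A short induction on the block index concludes that the total work is $\Theta(n)$, hence $O(n)$ in the length of the ledger.

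The main obstacle I anticipate is making Lemma 4.1 precise enough to constitute a genuine security statement rather than a merely structural one. The intuitive claim is that ``all services must agree,'' but what is actually provable is that any party lacking the secret keys of the Executing, Ordering, and Validation Services and of the user cannot produce valid signatures on the altered block, under the standard unforgeability assumption on the signature scheme. I would therefore phrase the lemma conditionally: assuming the signature scheme is existentially unforgeable and that distinct service providers do not share private keys, restoring ledger validity after tampering requires the active participation of every service whose signature covers the modified field. The $O(n)$ bound in Lemma 4.2 then follows unconditionally from the structural cascade, so the cryptographic subtlety is confined to Lemma 4.1.
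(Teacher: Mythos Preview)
Your proposal is correct and follows essentially the same approach as the paper: both establish Lemma~4.1 by walking through the layered signatures (Executing Service on each transaction, Ordering Service on the Merkle root, Validation Service on the header, and the user on the block) and arguing that any modification forces each of these parties to resign, and both establish Lemma~4.2 by observing that altering $B_i$ changes its header hash, which invalidates $B_{i+1}$\texttt{.previousHash} and cascades through to $B_n$ at constant cost per block. Your explicit conditioning on existential unforgeability of the signature scheme is in fact more careful than the paper's own treatment, which leaves that cryptographic assumption implicit; the paper's Lemma~4.2 differs only stylistically in that it writes out the Big-O bound with explicit constants rather than invoking the cascade via induction.
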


\begin{proof}
To prove this we will need to show:
\begin{itemize}
	\item Given a valid ledger, any changes to the data will require all services to agree on the change. (Lemma 4.1)
	\item Given a valid ledger, if a change is made to one block it will require O(n) time to modify the entire ledger. (Lemma 4.2)
\end{itemize}
We show these two facts in the following lemmas. Thus in using the proposed system the ledgers are tamper-resistant. 
\end{proof}

\begin{lemma}
In the proposed system any changes to the data will require all services to agree on the change in order for the block to remain valid.
\end{lemma}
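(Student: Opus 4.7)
The plan is to show that if an adversary modifies any data within a valid block $B_i$ to produce a block $B_i^\prime$ that is still valid by the definition in Section~\ref{sec:FormalDefinitionOfALedger}, then each of the Executing Service, the Ordering Service, and the Validation Service must have produced a fresh signature on the modified content. Since each of those signatures can only be generated by the holder of the corresponding private key, this forces all three services (together with the user) to agree to the change.

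I would start by fixing a valid block $B_i$ and supposing without loss of generality that the adversary changes the payload of a single transaction $t_k \in B_i.\texttt{data}$. First I would invoke requirement~3 of the data-block definition: the Executing Service signature on $t_k$ is a signature on the old transaction bytes, so by the unforgeability of digital signatures the Executing Service must produce a new signature on the modified $t_k^\prime$ for the block to remain valid. Next, because the Merkle tree of Executing Service signatures now has a different leaf, its root changes; requirement~4 then forces the Ordering Service to sign the new root. That new root lives in $B_i^\prime.\texttt{header}$, and the requirement~2 field $B_i^\prime.\texttt{dataHash} = \mathrm{Merkle}(B_i^\prime.\texttt{data})$ is also different, so $h(B_i^\prime.\texttt{header}) \neq h(B_i.\texttt{header})$, which by requirement~5 forces the Validation Service to issue a new signature over the new header. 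I would then handle the other possible modification sites (changing the header directly, or changing only a signature field) by the same cascade, noting that any change which does not invalidate the Merkle-based fields would be detectable by the argument used in Theorem~3 and therefore would already violate validity.

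The main obstacle is being precise about \emph{why} a new signature from each service really requires that service's cooperation, rather than merely an accidental collision. This reduces to the standard cryptographic assumption that the signature scheme used is existentially unforgeable under chosen-message attack, and that the hash function $h(\cdot)$ is collision-resistant so that an attacker cannot find an alternative block whose header collides with the original; I would state these assumptions explicitly at the start of the proof and then appeal to them at each step where a service's signature is said to be irreplaceable.

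Finally, I would conclude that any change to the data of $B_i$ which yields a valid $B_i^\prime$ requires new, valid signatures from the Executing Service, the Ordering Service, and the Validation Service, so that all services must agree to the change. This sets up Lemma~4.2, since a coordinated re-signing of $B_i$ additionally invalidates the \texttt{previousHash} of $B_{i+1}$ and propagates the same cascade forward through the remaining $n-i$ blocks.
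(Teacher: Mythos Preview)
Your proposal is correct and follows essentially the same cascade argument as the paper: a change to transaction data forces a new Executing Service signature, which changes the Merkle root and forces a new Ordering Service signature, which changes the header and forces a new Validation Service signature, with the User's signature also needing to be redone. You are in fact more explicit than the paper about the underlying cryptographic assumptions (existential unforgeability of the signature scheme and collision resistance of $h$), which the paper leaves implicit; otherwise the structure and conclusion match.
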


\begin{proof}
Assume a valid ledger $\mathcal{L}$ has $n$ blocks. Assume there is a method to modify data within a block that does not require all of the services to collude $B_i$ such that $i \in [0,n]$ that produces a valid block $B^\prime_i$ on the new valid ledger $\mathcal{L}^\prime$. 

Notice that each transaction is signed by the Executing Service. For $\mathcal{L}^\prime$ to remain valid all of the Executing Service signatures must remain valid.  Since the transactions are signed with a private key only the Executing Service can resign the transaction. Thus if transaction data is changed the Executing Service must resign the changed transaction.

Likewise, any changes to the transactions will change the root of the Merkle Tree. Since the Ordering Service signs the root of the Merkle Tree the Ordering Service must also agree to resign the block. If the Ordering Service does not collude the signature will be invalid, thus the block will be invalid. 

This same logic applies to the Validation Service. Since the Validation Service must sign the block header then the Validation Service is required to collude.

Lastly all blocks are signed by the user before they are added to the ledger. Thus by the same logic the user must also collude. 

We have shown that the Executing Service, the Ordering Service, the Validation Service, and the User must all collude to create a new valid block. Thus all services must be willing to collude in order to modify a single block. This shows our ledger resist tampering by requiring multiple independent services to conspire to modify blocks. 
\end{proof}

While it is unlikely that all of the services would conspire to modify a block it is possible; thus our system is designed to further resist tampering by requiring substantial work to modify a single block. 

\begin{lemma}
In the proposed system if a change is made to one block it will require O(n) time to modify the entire ledger to overcome the tamper-evident property.
\end{lemma}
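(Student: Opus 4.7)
The plan is to trace the cascade of invalidations that a single modification forces through the ledger, then count the number of blocks that must be touched. First I would invoke the already-proven Theorem 3 (tamper-evident) to observe that any modification of $B_i$ necessarily alters $h(B_i\texttt{.header})$. By the definition of a valid connection, this breaks $c(B_i,B_{i+1})$, since that connection requires $B_{i+1}\texttt{.previousHash} = h(B_i\texttt{.header})$.

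Next I would proceed by induction on the index $j$, $i \le j < n$, to show that the cascade of required modifications reaches every block through $B_n$. For the inductive step, suppose $B_j$ has been modified to $B'_j$ with $h(B'_j\texttt{.header}) \neq h(B_j\texttt{.header})$. To restore the connection $c(B'_j,B_{j+1})$ in the new ledger, the field $B_{j+1}\texttt{.previousHash}$ must be reset to $h(B'_j\texttt{.header})$. But changing that field changes $B_{j+1}\texttt{.header}$ itself, which (i) invalidates the Validation Service signature stored in that header, forcing a re-signing of the block, and (ii) yields $h(B'_{j+1}\texttt{.header}) \neq h(B_{j+1}\texttt{.header})$, which propagates the break to $c(B'_{j+1},B_{j+2})$. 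Finally, the appended user signature on $B_{j+1}$ must also be regenerated. Thus $B_{j+1}$ must itself be modified, completing the induction.

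Therefore every block with index $j \in [i,n]$ must be re-hashed and re-signed, amounting to $n - i + 1$ block modifications. Since each individual block modification consists of computing one cryptographic hash and producing a constant number of signatures, each step is $\Theta(1)$, and the total cost is $\Theta(n - i)$, which is $O(n)$ in the worst case $i = 0$.

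The hard part will be arguing that this is genuinely a lower bound on the work rather than merely an achievable upper bound, i.e.\ that no shortcut exists which leaves some intermediate block $B_k$ for $k > i$ untouched. I would close this gap by appealing to the collision-resistance of $h$: if any $B_k$ with $i < k \le n$ were left unmodified while $B_{k-1}$ was modified, then either $c(B'_{k-1},B_k)$ would be invalid (violating Definition 3) or $h(B'_{k-1}\texttt{.header}) = h(B_{k-1}\texttt{.header})$ would give a hash collision, contradicting our cryptographic assumption on $h$. This rules out any sub-linear modification strategy and yields the claimed $O(n)$ bound.
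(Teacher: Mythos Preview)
Your proposal is correct and follows essentially the same approach as the paper: both trace the cascade of header changes from $B_i$ through $B_n$, note that each block's repair (updating \texttt{previousHash}, the Validation Service signature, and the user signature) is constant time, and sum $n-i+1$ such constant-time steps to obtain $O(n)$. The paper presents this via an explicit big-$O$ calculation on a cost function $f(n)$ rather than by induction, and it omits the lower-bound argument you add at the end; your collision-resistance remark that no intermediate block can be skipped is a useful strengthening the paper leaves implicit.
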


\begin{proof}
To show that an $f(n)=O(g(n))$ we must show that $\exists n_0$ such that $0 \leq f(n) \leq c*g(n)$ $\forall n>n_0$. \cite{BigO}

Let $f(n)$ be the function representing the time it takes to modify the ledger to overcome the tamper-evident property. It takes some constant time $c_0$ to modify block $B_i$. Assuming that all parties are willing to resign the block. 

Then for each subsequent block $B_j$, it will take $k_1+k_2+k_3$ time to modify the block where $k_1$ is the time to modify the previous hash field such that it is valid, $k_2$ is the time to modify the validation service signature such that it is valid, and $k_3$ is the time it takes to modify the user's signature such that it is valid. We assume that whoever is tampering with the ledger has access to any necessary private keys. Notice $k_1,k_2,k_3$ are all constant time; thus, $k_1+k_2+k_3$ can be done in constant time $c_1$. Again this assumes that the validation service and the user are willing to resign the block. 

All blocks from block $B_{i+1}$ to block $B_{n}$ need to be modified this way. Thus $f(n) = c_0 + (c_1) * (n-i+1)$. We get that $f(n)=c_1*n-c_1*i+c_1+c_0$ Notice that $-c_1*i+c_1+c_0$ is constant we will call this constant $c_2$. Notice $i<n$ and $c_0>0$ implies $c_2>0$. $f(n) = c_1*n+c_2$

Let $g(n) = n$. Let $c$>$c_1$. Let $n_0 = 0$. 

$0 \leq f(n) = c_1*n+c_2 \leq c1*n \leq c*n = g(n)$. 

Thus $f(n) = O(g(n)) = O(n)$.
\end{proof}

\begin{figure}
	\centering
		\includegraphics[width=\columnwidth]{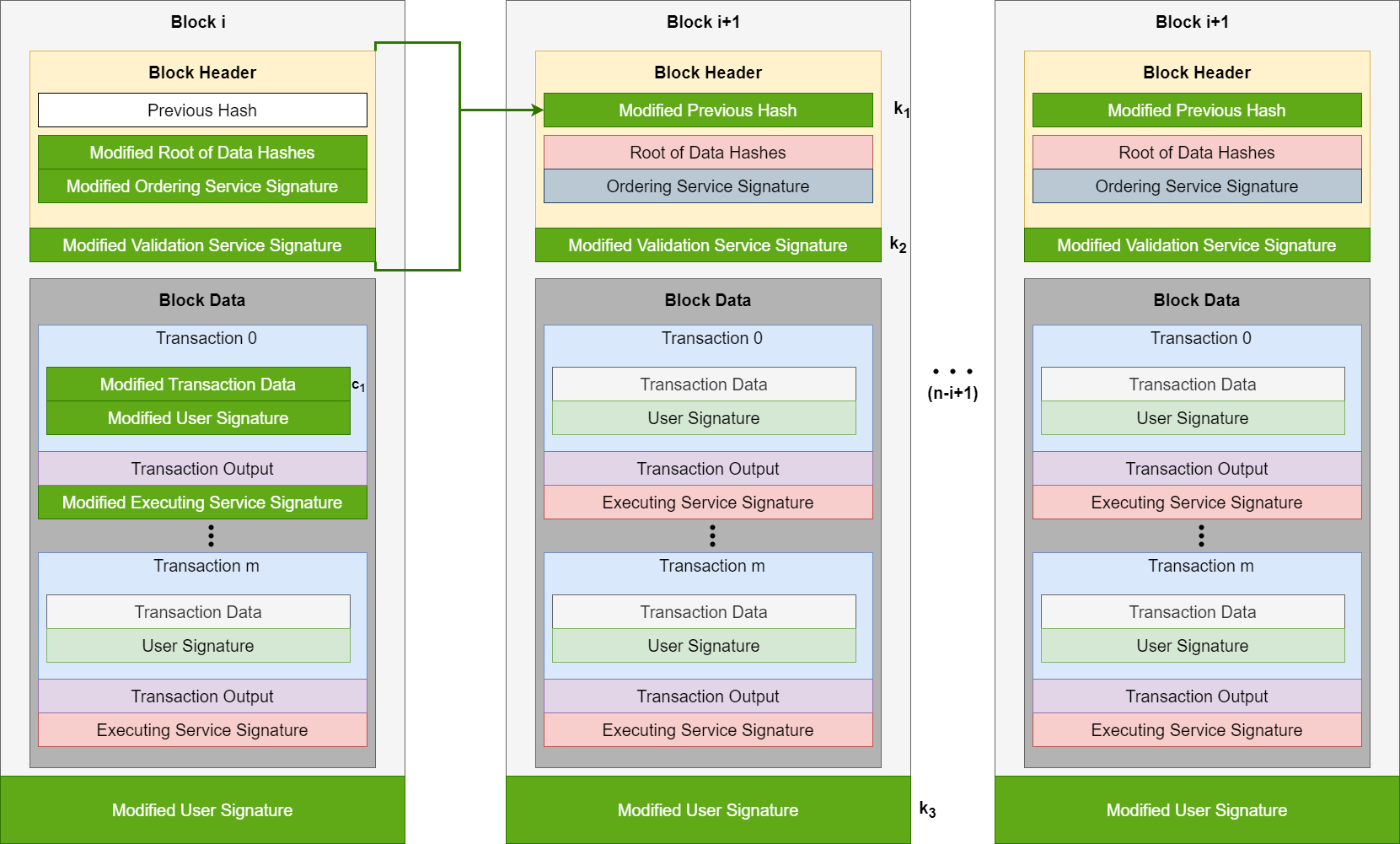}
	\caption{Visualization showing that it will take O(n) time to modify our ledgers.}
	\label{fig:TamperResistant}
\end{figure}

\subsection{Fault Tolerant} 
\label{sec:FaultTolerent}
The final property of blockchain that our ledgers satisfy is the fault-tolerant property. A fault occurs when one or more components of a system fail to work. In our system, a component is said to fail to work if the component does not respond to a request within a given Time To Live (TTL). A fault could be due to the component dying, having a slow network, refusing to participate, or another reason. To be fault tolerant in the context of our system is for a ledger to be available for reading and writing after a fault has occurred. 

Our system's components are not the full service but individual service providers. This is because the users utilize multiple independent service providers to provide services. For example, users may select service providers A and B to provide their Executing Service. A fault occurs if service provider A fails to respond to a request within a given TTL. Because service providers A and B are independent, we assume there is no correlation between the probability of service provider A faults and the probability of service provider B's faults.

\begin{theorem}
Our system is fault tolerant. If a component faults, the ledger is still available to read (Lemma 5.1), and the ledger is still available to write (Lemma 5.2). 
\end{theorem}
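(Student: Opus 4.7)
The plan is to break the proof into the two lemmas already announced, one for read availability and one for write availability, and in each case exploit the fact (established in Section~\ref{sec:ServiceProtocol}) that the user maintains a pool of multiple \emph{independent} service providers for every role and selects one freshly at the start of each block or transaction. I would avoid any probabilistic quantification and instead argue constructively: as long as a single operational provider remains in each required pool, the user can complete the operation despite the fault.

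For Lemma 5.1 (read availability), I would observe that reading the ledger consists only of fetching blocks from the Storage Service, with no cryptographic or ordering step invoked on the read path. The user may store the ledger redundantly across several storage providers in their trusted pool; if a subset of those providers faults in the sense of failing to respond within the TTL, the user simply reroutes the request to another provider still holding the ledger. Hence the ledger remains readable whenever at least one storage provider in the pool responds within the TTL.

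For Lemma 5.2 (write availability), I would walk through the write pipeline: the user selects a VSP, an OSP, and, for each transaction, an ESP, eventually handing the completed block to a storage provider. At each selection step the user draws from the corresponding pool; if the chosen provider does not acknowledge within the TTL, the user marks it as faulted and samples a replacement from the remaining members of that pool. Because providers are independent by assumption, the silence of one carries no implication for the others, so this retry procedure terminates either with a successful draw or with the conclusion that every member of some pool is faulted. In the former case the Executing, Ordering, and Validation services sign in turn, the user signs the resulting block, and it is appended to $\mathcal{L}$ as a valid block by Lemma 1.1; thus writing succeeds so long as each pool retains at least one live member.

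The main obstacle I anticipate is not technical difficulty but precision of statement: the argument is essentially definitional once the service-pool protocol is in place, so the real work lies in pinning down what ``fault,'' ``pool,'' and ``available'' mean, and in making the retry procedure explicit enough that ``at least one live provider per pool'' is visibly the necessary and sufficient condition. A subsidiary issue is the Genesis Block Authority, which does not participate in ordinary writes but is required once, at ledger creation, and is not obviously redundant in the same way; I would either restrict the theorem's scope to availability of an \emph{existing} ledger or add an explicit hypothesis that the GBA pool also contains at least one live provider at creation time.
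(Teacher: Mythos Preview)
Your proposal is correct and mirrors the paper's own proof almost exactly: the same two-lemma decomposition, the same constructive ``pick another provider from the pool'' argument, and the same conclusion that availability holds whenever at least one provider per required pool survives. The only notable addition in the paper's Lemma~5.1 is an explicit observation you gloss over---that even if \emph{all} signing-service providers (Executing, Ordering, Validation, Genesis Block Authority) fault, the ledger remains readable \emph{and verifiable} because their public keys are stored with the Root Address, so signature checks need no live response from those services; the paper also simply folds the Genesis Block Authority into the same $m-1$-fault argument for writes rather than treating it as a special case.
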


\begin{proof}
To prove this we will need to show:
\begin{itemize}
	\item Given a valid ledger, after a fault occurs the ledger is still available to read. (Lemma 5.1)
	\item Given a valid ledger, after a fault occurs the ledger is still available to write. (Lemma 5.2)
\end{itemize}
We show these two facts in the following lemmas. Thus in using the proposed system the ledgers are fault tolerant. 
\end{proof}

\begin{lemma}
In our proposed system, the ledger is still available to read if a fault occurs.
\end{lemma}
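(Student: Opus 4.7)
The plan is to reduce read availability to availability of the Storage Service and then exploit the architectural commitment, established in Section~\ref{sec:StorageService} and Section~\ref{sec:ServiceProtocol}, that a user may replicate their ledger across multiple independent storage providers drawn from a trusted pool. First I would trace the read path through the architecture: a user issues a read request through the Ledger API, which forwards it to the Storage Service, which retrieves the requested blocks from the user's chosen storage location and returns them. Crucially, no other service (Executing, Ordering, Validation, or Genesis Block Authority) participates in a read, since reading neither creates new transactions nor mutates the ledger. Thus faults in those four services are immediately irrelevant to read availability.

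Next I would handle the Storage Service itself, which is the only service whose fault could plausibly threaten reading. Here I would invoke the assumption that the user has a trusted pool of $k \geq 1$ independent storage providers, each holding a copy of the ledger, and that by independence a fault in one provider is uncorrelated with faults in the others. A read succeeds as long as at least one provider in the pool responds within the TTL. I would formalize this by letting $A_i$ denote the event that provider $i$ faults and observing that the read fails only on the joint event $\bigcap_{i=1}^{k} A_i$, whose probability is the product $\prod_{i=1}^{k} \Pr[A_i]$ and hence strictly less than any individual provider's fault probability. In particular, for any fault of a proper subset of providers, the ledger remains available to read, which is exactly the fault-tolerance requirement.

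The main obstacle will be pinning down the precise fault model so the argument is not vacuous: I need to state clearly that ``a fault'' in the lemma means a fault of some but not all providers of a given service, and that the Ledger API layer is either itself replicated or treated as a thin client the user can re-instantiate, so it does not become a single point of failure hidden behind the Storage Service argument. Once the model is fixed, the remainder of the proof is a short case analysis: enumerate the six services, dismiss the five that are not on the read path, and conclude with the multi-provider redundancy argument for the Storage Service. I would close by noting that the same independence assumption will be reused in Lemma~5.2 to cover the write path, where the argument is strictly harder because multiple services must cooperate.
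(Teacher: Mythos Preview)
Your proposal is correct and follows the same high-level decomposition as the paper: case-analyze the six services, argue the signing services are irrelevant to reading, and reduce to Storage Service redundancy. Two minor divergences are worth flagging. First, the paper dismisses the Executing, Ordering, Validation, and Genesis Block Authority services not by saying they are off the read path, but by arguing that even if all $m$ of their providers fault, the signatures already on the ledger can still be \emph{verified} because each provider's public key is recorded at the Root Address by the Ledger API; this covers a slightly stronger sense of ``read'' that includes integrity checking, whereas your argument treats reading as mere retrieval. Second, the paper uses a plain counting argument---with $m>1$ providers per service and at most $m-1$ Storage Service faults, one provider remains---and simply assumes the user's chosen storage mechanism is trusted and always available, deferring that caveat to Section~\ref{sec:Limitations}; it does not introduce your probabilistic bound on the joint failure event, nor does it address the Ledger API as a possible single point of failure. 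Your additions are reasonable refinements but go beyond what the paper's own proof does.
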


To prove this lemma, we assume that the user is storing their ledger in a trusted manner. Remember that the storage service does not store the ledgers but acts as an interface between the user and their desired storage mechanism. As stated, users may choose how their ledgers are stored, such as in IPFS, on existing cloud storage, or even on their own devices. The user may opt to use multiple storage locations for redundancy. This gives the users more control over their data. We assume that the user's selected storage location is always available. We discuss this assumption more in section~\ref{sec:Limitations}.

\begin{proof}
Assume a user has a list of acceptable service providers for each service. Assume that these lists contain $m$ service providers such that $m>1$. Lastly, we assume the user has selected a trusted method to store their ledger.

A fault can occur in any of our proposed services. The ledger can still be read if up to $m$ faults occur in any of the Executing  Service, Ordering Service, Validation Service, or Genesis Block Authority. Since these services are responsible for cryptographically signing blocks in the ledger, we need to show that even when $m$ faults occur the in these services, their signatures can still be verified. However, since these services sign using their private key to verify their signature, we need access to their public key. Remember that the Ledger API adds each service provider's public key to the Root Address; thus, even if the service does not respond, the user can still access the public key to verify the service's signature.  

Furthermore, the user can still access their ledger if up to $m-1$ faults occur in the Storage Service. If $m-1$ faults have occurred, then exactly 1 Storage Service provider is available. Since the provider has not faulted, the user can query this Storage Provider to access their ledger normally. Thus the user can access their ledger.

This shows that the ledger is available to read as long as no more than $m-1$ faults occur. 
\end{proof}

Notice that this proof shows that as long as at least one service provider for the storage service is available, the user will be able to read their ledger. Thus at maximum, all of the service providers for the Executing Service, Ordering Service, Validation Service, and Genesis Block Authority can fault. All but one of the service providers for the Storage Service can fault, and the ledgers are still available to read. Notice that the number of allowed faults is proportional to the number of service providers a user utilizes. Thus to ensure the availability of their ledger, a user should use a multitude of service providers. 

\begin{lemma}
In our proposed system, the ledger is still available to write if a fault occurs.
\end{lemma}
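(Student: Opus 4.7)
The plan is to mirror the structure of the proof of Lemma 5.1, but this time trace through the write path rather than the read path. I would begin by fixing the same setup: a valid ledger $\mathcal{L}$, a user who has chosen $m>1$ independent providers for each of the Executing, Ordering, Validation and Storage services, and independence of faults across providers. The Genesis Block Authority is irrelevant here, since writing to an existing ledger never invokes it; once the genesis block is in place, writes only touch the ESP, OSP, VSP, and the Storage Service.

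Next, I would invoke Lemma 1.1 (the Append Block Algorithm) as the specification of what ``writing'' means in our system: a new transaction must be signed by some ESP, forwarded to some OSP which forms and signs a block, handed to some VSP which validates and signs the header, returned through the Ledger API for the user's signature, and finally persisted through the Storage Service. My task reduces to showing that at each of these four service slots at least one non-faulted provider remains reachable. Since the user's trusted pool for each service has size $m$, the pigeonhole observation is that if strictly fewer than $m$ providers for a given service have faulted, at least one is still live; by the Service Protocol of Section~\ref{sec:ServiceProtocol}, the user selects a provider at random (or retries on timeout after TTL), so the write can be routed to a live provider in each slot. Thus the append algorithm can be executed end to end.

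I would then bound the fault budget. Up to $m-1$ faults in each of the four services can be tolerated simultaneously, because a single surviving provider per service is enough to complete one append. In total, the system tolerates as many as $4(m-1)$ simultaneous faults across the write path without losing write availability, and, by the independence assumption, the probability that all $m$ providers for a given service fault within a single TTL window shrinks rapidly in $m$. As in Lemma 5.1, I would note explicitly that the level of fault tolerance scales with the size of the user's trusted pool.

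The main obstacle, and the place where I would be most careful, is the Storage Service step: unlike the signing services, a write must not only succeed at one provider but must leave the ledger in a state where subsequent reads return the newly appended block. I would handle this by appealing to the same assumption used in Lemma 5.1 that the user's chosen storage backend is trusted and (when redundancy is desired) replicated across the $m$ storage providers, so that the Storage Service interface writes to every non-faulted replica; one live storage provider suffices for the write to persist, and Lemma 5.1 then guarantees it can be read back. I would defer any discussion of what happens if \emph{all} storage providers fault to Section~\ref{sec:Limitations}, consistent with the assumption already made for reads, and conclude that the ledger remains available to write whenever at most $m-1$ faults occur per service.
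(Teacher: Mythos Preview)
Your proposal is correct and follows essentially the same approach as the paper: both argue that with $m>1$ providers per service, up to $m-1$ faults per service still leave at least one live provider through which each stage of the append can be routed, and then treat the Storage Service separately under the trusted-storage assumption. The paper's version is terser---it does not invoke Lemma~1.1 explicitly, does not compute the $4(m-1)$ aggregate bound or the probabilistic remark, and (unlike you) keeps the Genesis Block Authority in its list of write-path services rather than excluding it---but the core argument is identical.
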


\begin{proof}
Assume a user has a list of acceptable service providers for each service. Assume that these lists contain at least $m>1$ service providers. Lastly, we assume the user has selected a trusted method to store their ledger.

A fault can occur in any of our proposed services. The ledger can still be written to if up to $m-1$ faults occur in any of the Executing  Service, Ordering Service, Validation Service, or Genesis Block Authority. Since at least one service provider has not faulted, the user can select a valid service provider at the start of each round. Thus the user can always send their request to a valid service provider. Thus these services can continue to add new blocks to the ledger. 

Furthermore, the user can still access their ledger if up to $m-1$ faults occur in the Storage Service. If $m-1$ faults have occurred, then exactly 1 Storage Service provider is available. Since the provider has not faulted, users can send post requests to this Storage Provider to access their ledger normally. Thus the user can add new blocks to their ledger.

This shows that the ledger is available to write as long as no more than $m-1$ faults occur. 
\end{proof}

Notice that the system's fault tolerance is a function of $m$. Thus the more service providers a user chooses, the more faults our system can tolerate. 

Using the previous five theorems, we have proved that ledgers generated by our proposed systems maintain all blockchains' properties. We proved these ledgers are append-only, immutable, tamper-evident, tamper-resistant, and fault tolerant. Thus we can state that our system creates personal blockchain ledgers. The following section will discuss why such a system is needed to solve use cases where traditional blockchains fail. 

\section{Discussion}
\label{sec:Discussion}

\subsection{Use Cases} 
\label{sec:UseCases}
In previous sections, we presented our proposed modular system, which relies on independent services to achieve the desirable properties of blockchain. We have proved that our system enables the creation of append-only, immutable, tamper-evident, tamper-resistant, and fault-tolerant ledgers. Since we have shown that this system creates valid blockchain ledgers, it begs the question of when the system would be desirable over traditional blockchain systems. To highlight this, we present an example application for storing user financial transactions. 

Financial systems have become more accessible and specialized. This specialization has led consumers to utilize multiple different financial intuitions. The complex web of various financial institutions has made it difficult for consumers to keep track of all their financial transactions. For example, an individual may have multiple banks, credit cards, and investments. Figure~\ref{fig:Traditional} shows how a customer interacts with existing financial institutions. The customer is receiving data from many different financial institutions. The consumer is responsible for collecting all of these sources of financial data, which can be time-consuming, tedious, and prone to mistakes. 

\begin{figure}
	\centering
		\includegraphics[width=.45\textwidth]{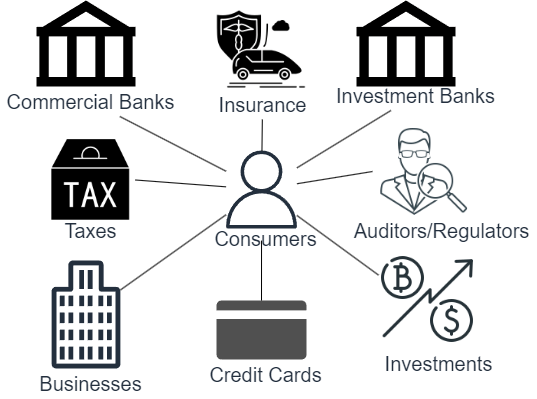}
	\caption{A high-level view of the traditional financial institutions. In this model, customers collect and maintain their financial transactions.}
	\label{fig:Traditional}
\end{figure}

We propose an application that aggregates all their financial transactions across multiple financial institutions to aid consumers. Our proposed application relies on existing APIs provided by financial institutions to collect transaction data. The application then stores the financial transactions in a blockchain. The blockchain makes users' financial records immutable and forces transactions to be chronological, two highly desired properties when tracking financial records. Figure~\ref{fig:HighLevel} gives a high-level overview of such an application. 

\begin{figure}
	\centering
		\includegraphics[width=.45\textwidth]{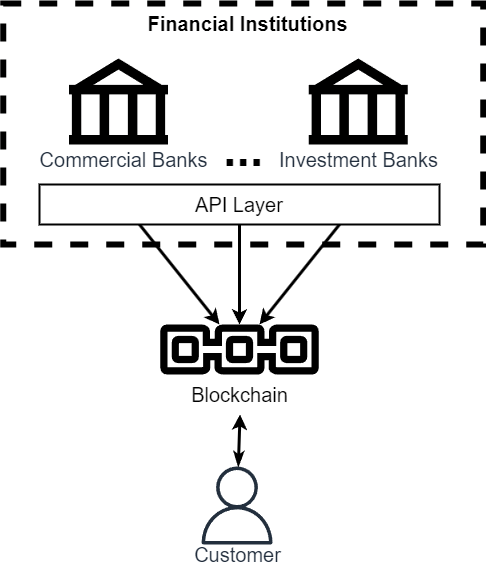}
	\caption{A high-level overview of the proposed blockchain-augmented financial application. This model stores a customer's financial transactions on a blockchain. We assume the financial institutions already have an existing API layer to send transaction data.}
	\label{fig:HighLevel}
\end{figure}

Some non-blockchain-based applications, such as Mint by Intuit~\cite{intuitMintApplictation}, have already attempted to solve this problem. These applications aggregate a user's financial transitions to the application. While this does make tracing complex financial transactions easier for consumers, these applications provide convenience to the customer at the cost of privacy. Users must hand over their finical records to a centralized third party. This third party may use the financial transaction data for advertising or other purposes. Even if the third party does not utilize the data, if a hacker were to breach the third party, the customer's sensitive financial data could get leaked. Using a centralized application requires consumers to take on risk, which is undesirable in financial applications. 

Likewise, a similar application that allows users to store data on their devices or cloud infrastructure does not provide all the desirable features. Critically past financial history should never change. Thus any storage should be immutable. Likewise, financial transaction history must be ordered. For example, consumers need to know if they have funds in their accounts before purchasing. Thus the consumer must ensure that they have funds deposited before they make a purchase. In addition, no party should have access to a consumer's financial records other than the consumer and the financial institutions. Considering these factors, blockchain presents itself as an ideal solution. However, current blockchain implementations are suboptimal for this proposed application.

If our proposed application were to use a permissionless blockchain such as Ethereum, users would have to publish their financial transactions to the blockchain. Since anyone can read a public blockchain, this could expose sensitive financial transactions. Even if the user were to take privacy measures such as encrypting the transactions, they would still risk leaking sensitive information. User error, mismanaged keys, or new technologies can reveal encrypted data. Lastly, users must pay a fee for every transaction in a permissionless blockchain. Even on blockchains with low fees, the cost can dissuade individuals and businesses alike from using permissionless blockchains. 

In contrast, if our proposed application used an existing permissioned blockchain, such as Hyperledger Fabric, the user would be required to create and maintain their own blockchain infrastructure. Creating and maintaining infrastructure can be costly and complex. Likewise, users are responsible for managing permissions to the blockchain. If a user were to make a mistake with the permission set, they could grant access to parties who should not have access. A permissioned blockchain requires technical skill and computing power and is prone to mistakes making it a suboptimal solution for an application aimed at average consumers.

Blockchain database solutions such as Amazon	QLDB are suboptimal for this use case. QLDB requires the user to give all their information to a single service provider, in this case, Amazon. This locks the user into a service provider requiring the user to provide Amazon with all of their data. An ideal solution would give users full control over who has access to their data and allow users to switch service providers without disrupting the blockchain.

Our system solves these problems presented by traditional blockchains. Like permissionless blockchains, our system is highly accessible through independent modular services. Users in our system do not need extensive technical knowledge to create and maintain their blockchains. However, like permissioned blockchains, the user's private data can only be accessed by trusted parties. Likewise, our modular system prevents vendor lock-in and allows for increased data privacy. By blending the desirable properties of permissioned and permissionless blockchains, our system allows for more unique use cases that traditional models do not address. 

Our architecture provides a new paradigm, allowing blockchain technology to expand to more use cases. Just as private blockchains expanded the possible use cases for public blockchains, we envision our individual blockchains will further diversify blockchain's use cases. 

\subsection{Limitations} 
\label{sec:Limitations}

While our system covers many use cases not addressed by traditional blockchain development platforms, it still faces some limitations, including not being suited for all blockchain applications and passing some responsibility to the users. 

Our system provides users with a high amount of control over their data. While this is a desirable feature for preserving privacy, it can be dangerous. Like with all blockchains, the data stored on our ledgers can not be changed after it has been added. To ensure the user approves all data added, we require the user to sign all transactions and blocks. However, a careless user may still sign something they did not intend. In our system, the user is responsible for ensuring the correctness of the data they are adding to their ledgers. 

Similarly, our system allows users to choose their storage system, which, if chosen poorly, can lead to a loss of availability. The ability to choose the storage system makes our system highly flexible, which is desirable for many users. However, this flexibility allows users to make mistakes that could lead to complications with their ledger. The Storage Service does provide features, such as interfacing with multiple storage locations and allowing users to change storage locations, but a careless user may ignore these features. While there are some safeguards to help users, it is up to the user to ensure their ledgers are configured correctly. 

A common application of blockchain technology is cryptocurrencies such as ERC20 Tokens. Since our ledgers are designed for personal use, creating a cryptocurrency using our system is infeasible. While cryptocurrency has rapidly increased the popularity of blockchain, we believe future blockchain applications will not resemble these early use cases. Blockchain technology can be expanded to a more diverse set of applications. Thus, we designed our system to address use cases not addressed by traditional blockchains. 

It is critical to note that we do not propose replacing existing blockchain development platforms. Platforms like Ethereum and HyperLedger Fabric cover many use cases for our system is not designed for. Rather, this work aims to propose a new blockchain system that supplements current blockchain systems, allowing for more diverse blockchain applications. 

\section{Conclusion} 
\label{sec:Conclusion}
This work defined a novel system for creating personal blockchain ledgers. Our system relies on six novel independent modular services to provide users with personal blockchains. Critically, we then proved that our ledgers maintain the five properties of the blockchain. Finally, we highlighted how our system provides a new paradigm for creating blockchain applications. 

It is important to note that our ledgers are designed for personal use. That is, the only stakeholder in the ledger is the user. Thus, unlike traditional public blockchains, users can store sensitive data on their ledger. Similarly, unlike traditional private blockchains, users do not need to set up their own blockchain system to use our ledgers. Our system makes it simple and secure for non-technical users to utilize blockchain technology. 

In future work, we plan to show how these personal ledgers can be shared securely. This will allow third-party entities to view our ledger data while the user continues to maintain control of this data. Likewise, we plan to release a complete implantation of this system on our GitHub page\cite{CollinConnorsGitHubPage}.

Overall our system will help make blockchain technology more assessable to users by allowing for more diverse blockchain applications. We hope our system provides a new paradigm allowing even non-technical users to use blockchain technology to safely store and maintain their documents.

\bibliographystyle{plain} 
\bibliography{refs}
\end{document}